\documentclass[letterpaper, 10 pt, conference]{ieeeconf}  % Comment this line out
\IEEEoverridecommandlockouts                              % This command is only
\usepackage{cite} 
\usepackage{graphicx}
\usepackage{textcomp}
\usepackage{graphicx} 
\usepackage{float}% include this line if your document contains figures
\usepackage[nopar]{lipsum}

\usepackage{amsmath,amssymb,amsfonts,xpatch,dsfont}
\newcommand{\R}{\mathbb{R}}

\newcommand{\E}{\mathbb{E}}

\newcommand{\argmin}{\mathop{\rm arg~min}\limits}
\DeclareMathOperator{\suth}{s.\;t.}

\newcommand{\diag}{\text{diag}}
\DeclareMathOperator{\kernel}{ker}
\DeclareMathOperator{\image}{im}
\DeclareMathOperator{\one}{\mathds{1}}
\DeclareMathOperator{\vectorize}{vec}
\newcommand{\unob}{\eta_{\bar{o}}}
\newcommand{\obs}{\eta_{o}}
\newcommand{\unobpre}{\hat{\eta}^{-}_{\bar{o}}}
\newcommand{\obspre}{\hat{\eta}^{-}_{o}}
\newcommand{\unobest}{\hat{\eta}_{\bar{o}}}
\newcommand{\obsest}{\hat{\eta}_{o}}
\newcommand{\errorobspre}{P^{-}_{oo}}
\newcommand{\errorunobpre}{P^{-}_{\bar{o}o}}
\newcommand{\errorobsest}{P_{oo}}
\newcommand{\errorunobest}{P_{\bar{o}o}}
\newcommand{\weight}{q}
\newcommand{\obsgain}{L_{o}}
\newcommand{\unobgain}{L_{\bar{o}}}
\newcommand{\mat}[1]{\left[\; \begin{matrix} #1 \end{matrix} \:\right]}

\newcommand{\Hada}{\mathsf{H}}
\newcommand{\simode}[1]{\left\{\;
	\begin{aligned} #1 \end{aligned} \right.}
\newcommand{\abs}[1]{\left\lvert{#1}\right\rvert} %absolute value

\newtheorem{theorem}{Theorem}

\newtheorem{remark}{Remark}
\newtheorem{lemma}{Lemma}

\title{\LARGE \bf
	Explicit Ensemble Mean Synchronization for Time Scale Generation with Mixed Atomic Clock Ensembles*
}

\author{Priyanka Dey, %\IEEEmembership{Member, IEEE}, 
	Takahiro Kawaguchi, %\IEEEmembership{Member, IEEE}, 
	Yuichiro Yano, %\\ 
	Yuko Hanado,   and Takayuki Ishizaki %,  \IEEEmembership{Member, IEEE% <-this % stops a space
		\thanks{*This work is supported by the Ministry of Internal Affairs and Communications (MIC) under its "Research and Development for Expansion of Radio Resources (JPJ000254)" program.}% <-this % stops a space
		\thanks{P. Dey and T. Ishizaki are with the Department of Systems and Control Engineering, Institute of Science Tokyo, Meguro, Tokyo 152-8552 Japan (Emails: {\tt dey.p.aa@m.titech.ac.jp, ishizaki@sc.eng.isct.ac.jp}). T. Kawaguchi is with the Division of Electronics and Informatics, Gunma University, Kiryu, Gunma 371-8510 Japan (Email: {\tt kawaguchi@gunma-u.ac.jp}). Y. Yano and Y. Hanado are with the National Institute of Information and Communications Technology, Koganei, Tokyo 184-0015 Japan (Emails: {\tt y-yano@nict.go.jp, yuko@nict.go.jp}).}%
	}
	
\begin{document}

		\maketitle
		\thispagestyle{empty}
		\pagestyle{empty}

		%%%%%%%%%%%%%%%%%%%%%%%%%%%%%%%%%%%%%%%%%%%%%%%%%%%%%%%%%%%%%%%%%%%%%%%%%%%%%%%%
		\begin{abstract}
			In this paper, we consider a mixed ensemble containing a mixture of cesium-type and hydrogen maser-type atomic clocks. 
			For the mixed ensemble, the conventional Kalman filtering algorithm has certain limitations due to divergence of the error covariance matrix. 
			To overcome these limitations, we obtain a Kalman filtering algorithm based on observable canonical decomposition that does not have any diverging terms. 
			We use the estimates from the transformed Kalman filter to propose a time scale generation algorithm called explicit ensemble mean synchronization algorithm for the mixed ensemble. 
			In this algorithm, we synchronize the time deviation of each clock from the ideal clock behavior to the unobservable ensemble mean of the phases where the weighting can be decided by the user.
			By regulating the free-running dynamics associated with the unobservable state, through choosing an appropriate weight vector, the frequency stability of the generated time scale or the synchronized time shared by the clocks is optimized over shorter (resp. longer) intervals, as measured by Hadamard variance. 
			An illustrative example is given to demonstrate the efficiency of our algorithm.
			
		\end{abstract}

		%%%%%%%%%%%%%%%%%%%%%%%%%%%%%%%%%%%%%%%%%%%%%%%%%%%%%%%%%%%%%%%%%%%%%%%%%%%%%%%%
		\section{Introduction}
		\label{s:Introduction}
		\noindent An atomic clock ensemble is a group of highly precise atomic clocks of different types, such as cesium and hydrogen maser clocks, that operate collectively to provide a reliable and accurate timing infrastructure \cite{galleani2010time}.
		Atomic clocks are devices that measure time based on the vibrations of atoms with constant resonance frequencies.
		The national metrology institute of each country uses the clock difference measurements from several atomic clocks within an ensemble to reduce these vibrations of atoms to create a reliable and robust timekeeping system \cite{ref:HanImaKot-07}.
		This development of accurate time scales provides significant advantages across several critical sectors, including satellite
		navigation, financial networks with high-frequency trading and time-sensitive transactions, smart grids, etc \cite{Bandi2023compresive}.
		The fluctuations in the tick rates of atomic clocks are referred to as time deviations from ideal clock behavior, which can be modeled as a stochastic process. 
		Based on this stochastic nature of time deviation, the cesium-type (Cs-type) clock has a second-order model \cite{galleani2008atutorial} while the hydrogen maser-type (Hm-type) clock has a third-order model with an additional frequency drift state \cite{zucca2005theclock}.
		
		In the task of time scale generation, the time deviation of each clock in the ensemble is estimated to correct the reading of the clock either by software offset or by physically steering the clock. 
		Conventional Kalman filter (CKF) is extensively used to accomplish this estimation \cite{Brown1991theory, galleani2003use}.
		However, the system model of atomic clock ensemble is undetectable because of its inherent characteristic that only clock reading differences can be measured.
		As a result, the error covariance matrix grows boundlessly, which may lead to numerical instability.
		However, the Kalman gain converges to a constant value. 
		This indicates that the CKF has some indeterminate operations in the algorithm.
		Some works proposed methods to add a modification step to the error covariance matrix in the CKF algorithm \cite{greenhall2006kalman, Brown1991theory, suess2013congruence}.
		The time scale is obtained by \emph{implicitly} taking the weighted average of the corrected clock readings termed as implicit ensemble mean (IEM) \cite{greenhall2006kalman, greenhall2012review}.%, Yan2014Structured
		
		IEM has better accuracy and precision than individual clock in the ensemble. 
		In time and frequency community, frequency stability quantifies accuracy and precision of the time scale as it describes how well the frequency is maintained over a certain period of time \cite{lombardi2002fundamentals}.
		Hadamard variance (HVAR) is used to measure the frequency stability of clocks and time scales \cite{ishizaki2023higher-order, Riley2008handbook}.
		A smaller value of HVAR over an interval indicates that the time scale has reduced divergence rate from the ideal time over that interval.
		%In this paper, we use stability to mean frequency stability unless otherwise specified.
		Cs-type clock has excellent frequency stability for long intervals, while exhibiting a lower frequency stability for short intervals while Hm-type clock has excellent short-term frequency stability but lower long-term frequency stability \cite{galleani2010time, ishizaki2023higher-order}.
		
		Our group has proposed a time scale generation algorithm from an ensemble containing only Cs-type clocks \cite{Ishizaki2025Explicit}. 
		The algorithm includes decomposing the ensemble into an observable component, representing the synchronization error, and an unobservable component which symbolizes the weighted average clock states termed as the generated time scale. 
		%The observable part  and the unobservable part 
		However, due to existence of frequency drift in Hm-type clocks, the technique involved in this algorithm \cite{Ishizaki2025Explicit} needs careful modifications and cannot be applied as it is for a mixed ensemble. 
		In \cite{Dey2024Clock}, we developed a new steering technique for arbitrary-order atomic clocks based on output stabilization for tracking of a virtual reference time such as GPS composite clock \cite{Brown1991theory} by a single follower atomic clock. %, paper clock \cite{trainotti2019simulating}
		Recently, we proposed an algorithm to generate a time scale with excellent long-term frequency stability by using only the ensemble of Hm-type clocks \cite{Dey2025Steering}. 
		A separate ensemble of Cs-type clocks is used solely to optimize the long-term frequency stability \cite{Dey2025Steering}. 
		We developed a time synchronization algorithm for mixed ensemble in \cite{Dey2024Synchronization}. 
		However, neither we have not paid attention on how this work can be exploited to introduce a technique for generating a time scale nor we concentrated on the frequency stability aspect, which is extremely important for time scale generation.

		In this paper, we introduce a novel time scale generation algorithm, which we call explicit ensemble mean synchronize algorithm for a mixed ensemble containing Cs-type and Hm-type clocks.
		Firstly, we obtain a Kalman filtering algorithm based on the observable canonical decomposition of the mixed ensemble which does not contain any diverging terms.
		Secondly, using the observable state estimates from the transformed Kalman filter, we regulate the frequency of the clocks to synchronize the time deviations of clocks to the \emph{unobservable} weighted average phase state, where the weighting can be \emph{explicitly} specified by the user. 
		In order to optimize the short-term/long-term frequency stability of the synchronized time or the generated time scale (GTS), we consider the free-running dynamics associated with the unobservable state. 
		We evaluate the HVAR of the free-running dynamics to find appropriate weight vector to improve the short-term (resp. long-term) frequency stability of the generated time scale.
		%Furthermore, it has been revealed \emph{numerically} that by using the unobservable state estimates from steady state transformed Kalman filter, we can design a periodic control to simultaneously optimize the long-term and short-term frequency stability of the resulting time scale.
		% In \cite{Dey2024Synchronization}, we observed that the mixed atomic clock ensemble is undetectable. 
		% As shown in, applying standard Kalman filter for an undetectable system leads to growing of the error covariance unboundedly, which might lead to numerical instability.
		% However, the standard Kalman filter gain converges to a constant values.
		% It means that the standard Kalman filter has some indeterminate operations in the algorithm
		
		The rest of the paper is organized as follows. 
		Section \ref{s:atomicclockmodel} describes the model of the mixed ensemble and presents the problems considered in the paper.
		In Section \ref{s:preliminaryresults}, we obtain the transformed Kalman filter and its steady-state version, which is free from any diverging terms, by employing observable canonical decomposition on the mixed ensemble. 
		We present our explicit ensemble mean synchronization algorithm for generating a time scale in Section \ref{s:solutionofProb2}.
		We illustrate the efficacy of our proposed algorithm through a numerical example in Section \ref{s:example}.% including a discussion section to observe numerically that unobservable state estimates can be used to optimize frequency stability of the generated time scale.
		
		\noindent \textbf{Notations}: We denote the set of real numbers and positive real numbers by \(\R\) and \(\R^{+}\), respectively. 
		%For \(r\in \N\), we let \([r]=\lbrace 1,2,3,\ldots,r\rbrace\). 
		For a matrix \(A\in \R^{n\times m}\), its transpose is \(A^{\top}\), while its kernel and image are \(\kernel A\) and \(\image A\), respectively. %, \(\rank(A)\) is its rank.
		We denote Kronecker product by \(\otimes\), the \(n\)-dimensional all-ones column vector by  \(\one_{n}\), and the identity matrix of dimension \(n\) by \(I_n\). 
		We denote the zero matrix of dimension \(n\times m\) by \(0_{n\times m}\) or \(0\) when the dimension is clear. 
		A diagonal matrix with diagonal entries as \(\lbrace c_1, c_2,\ldots, c_n\rbrace\) is denoted by \(\text{diag}(c_1,c_2, \ldots, c_n)\). The expected value of a random element is denoted by \(\E[\cdot]\).
		
		%%%%%%%%%%%%%%%%%%%%%%%%%%%%%%%%%%%%%%%%%%%%%%%%%%%%%%%%%%%%%%%%%%%%%%%%%%%%%%%%%%%%%%%%%%%%%%%%%%%%%%%%%%%%%%%%%%%%%%%%%%%%%%%%%%%%%
		\section{Mixed Ensemble Model and Problem Statement}
		\label{s:atomicclockmodel}
		\noindent We consider a mixed ensemble of \(N\) clocks that contains a mixture of Cs-type and Hm-type clocks, where the last \(M > 0\) clocks are Hm-type clocks.
		Consider a discrete-time sequence \(\lbrace t_0, t_1, \ldots \rbrace\) in an ideal time scale where we assume that the time difference between two adjacent times is a constant sampling interval \(\tau\) i.e., \(t_k = \tau k\), for \(k= 0,1, \ldots\)
		For brevity, we will use
		\[
		A=\mat{1 & \tau \\ 0 & 1},\; \beta=\mat{\frac{\tau^2}{2}\\ \tau},\; B=\mat{\tau \\ 1},\; C=\mat{1 & 0}.
		\]
		The discrete-time clock deviation model of a Cs-type atomic clock \(i \in \mathcal{N}\), where \(\mathcal{N}:=\lbrace 1, 2, \ldots, N-M\rbrace\) is \cite{galleani2008atutorial}
		\begin{equation}
			\begin{aligned}
				\label{e:secondorderclock}
				%  \simode{
					\mat{p_{i}[k+1]\\ f_{i}[k+1]} &= A\mat{p_{i}[k]\\ f_{i}[k]} +B u_i[k] + \mat{v_{1i}[k]\\ v_{2i}[k]},\\
					%  h_i[k]&= \underbrace{\mat{1 && 0}}_{\scriptsize C}\mat{p_{i}[k]\\ f_{i}[k]},
					% }
			\end{aligned}
		\end{equation}
		where \(p_i[k]\) is the time deviation (or phase deviation) of clock \(i\), \(f_i[k]\) is the state related to frequency deviation, and \(u_i[k]\in \R\) is the control input to clock \(i\). 
		The covariance matrix of the white Gaussian process \( \mat{v_{1i}[k] & v_{2i}[k]}^{\top}\) of zero mean is given by
		\vspace{-0.15cm}
		\begin{equation}
			\label{e:covariance second order}
			Q_i = \mat{\tau \sigma_{1i}^2 + \frac{\tau^3}{3}\sigma_{2i}^2 && \frac{\tau^2}{2}\sigma_{2i}^2 \\ \frac{\tau^2}{2}\sigma_{2i}^2 && \tau \sigma_{2i}^2},
		\end{equation} 
		where \(\sigma_{1i}\), \(\sigma_{2i}\) are the standard deviations of white frequency noise and random walk frequency noise of clock \(i\in \mathcal{N}\), respectively.
		
		The discrete-time clock deviation model of a Hm-type atomic clock \(j\in \mathcal{M}\), where \(\mathcal{M}:=\lbrace N-M +1, N-M+2, \ldots, N\rbrace\) is represented by \cite{zucca2005theclock}
		\begin{small}
		\begin{equation}
			\label{e:thirdorderclock}
			\mat{p_{j}[k+1]\\f_{j}[k+1]\\z_{j}[k+1]}=\mat{A & \beta \\ 0 & 1}\mat{p_{j}[k]\\f_{j}[k]\\z_{j}[k]} \\  + \mat{B \\ 0 }u_j[k] + \mat{v_{1j}[k]\\v_{2j}[k]\\v_{3j}[k]},
		\end{equation}
		\end{small}
		where \(p_j[k]\), \(f_j[k]\), and \(z_j[k]\) are the time deviation, state related to frequency deviation, and frequency drift, respectively. The input to clock \(j\) is \(u_j[k]\in \R\).
		The covariance matrix of the white Gaussian process \(\mat{v_{1j}[k] & v_{2j}[k] & v_{3j}[k]}^{\top}\) with zero mean is 
		\begin{small}
			\begin{equation}
				\label{e:covarianceof thirdorderclock}
				Q_j =\mat{
					\tau\sigma_{1j}^2+\frac{\tau^3}{3}\sigma_{2j}^2+\frac{\tau^5}{20}\sigma_{3j}^2 && \frac{\tau^2}{2} \sigma_{2j}^2 + \frac{\tau^4}{8} \sigma_{3j}^2 && \frac{\tau^3}{6} \sigma_{3j}^2\\ \frac{\tau^2}{2}\sigma_{2j}^2 + \frac{\tau^4}{8}\sigma_{3j}^2 && \tau\sigma_{2j}^2 + \frac{\tau^3}{3}\sigma_{3j}^2 && \frac{\tau^2}{2}\sigma_{3j}^2\\
					\frac{\tau^3}{6} \sigma_{3j}^2  && \frac{\tau^2}{2}\sigma_{3j}^2 && \tau \sigma_{3j}^2 },
			\end{equation}
		\end{small}
		where \(\sigma_{1j}\), \(\sigma_{2j}\), and \(\sigma_{3j}\) are standard deviations of the white frequency noise, random walk frequency noise, and random run of frequency drift of clock \(j\in \mathcal{M}\), respectively.
		
		\noindent Using Kronecker product, we can obtain a compact representation of the ensemble as \cite{Dey2024Synchronization}
		\vspace{-0.1cm}
		\begin{equation}
			%\begin{small}
			\label{e:atomicclockensemble}
			\simode{
				\mat{
					x[k+1] \\
					z[k+1]
				}
				&=
				%\underbrace{
					\mat{
						A \otimes I_{N} && \beta \otimes J  \\
						0  && I_M
					}
					%}_{\bm{A}}
				\mat{
					x[k] \\
					z[k]
				}
				+\\ 
				&
				%\underbrace{
					\mat{
						B \otimes I_{N} \\
						0
					}
					%}_{\bm{B}}
				u[k]
				+ 
				\mat{
					v^x [k] \\
					v^{z}[k]
				},\\
				y[k]&= 
				%\underbrace{
					\mat{
						C \otimes V && 0
					}
					%}_{\bm{C}}
				\mat{
					x[k] \\
					z[k]
				}
				+w[k],
			}
			%\end{small}
		\end{equation}
		where \(x[k] :=\mat{p^{\top}[k] & f^{\top}[k]}^{\top}\in \R^{2N}\),
		\vspace{-0.1cm}
		\begin{align*}
			p[k]&:= \mat{p_1[k], \ldots, p_N[k]}^{\top},\; f[k]= \mat{f_1[k], \ldots, f_N[k]}^{\top},\\
			z[k]&:=\mat{z_{N-M+1}[k], \ldots, z_{N}[k]}^{\top}\in \R^{M},\\
			u[k]&:=\mat{u_1[k], u_2[k], \ldots, u_N[k]}^{\top}\in \R^{N},\\
			J&:=\mat{
				0_{(N-M) \times M} \\ I_M
			} \in \mathbb{R}^{N\times M},\; v^{x}[k] := \mat{v^{p}[k]\\ v^{f}[k]},\\
			v^{p}[k]&:=\mat{v_{11}[k], v_{12}[k], \ldots, v_{1N}[k]}^{\top},\\
			v^{f}[k]&:=\mat{v_{21}[k], v_{22}[k], \ldots, v_{2N}[k]}^{\top},\\
			v^z[k]&:=\mat{v_{3 N-M+1}[k], \ldots, v_{3N}[k]}^{\top}.
		\end{align*}
		The measurements \(y[k]\in \R^{N-1}\) are the difference of time deviation (or said differently clock reading) measured independently among the clocks. 
		The matrix \(V\in \R^{(N-1)\times N}\) specifies the pairs of clocks whose clock reading difference is being measured and it is such that \(\kernel V = \image \one_{N}\).
		The measurement noise \(w[k]\) is a white Gaussian noise such that
		\[
		\E\big[w[k]\big] = 0, \,\, \E\big[w[k]w^{\top}[k]\big] = rI_{N-1}\;\; \text{for some}\;\;r\in \R^{+}.
		\]
		
		We assume that all the clocks are independent of each other, then the covariance of the process noise in \eqref{e:atomicclockensemble} is \(Q\in \R^{(2N+M)\times (2N+M)}\) given as
		\begin{equation}
			\label{e:processnoisecovariance}
			Q = \mat{\tau \Sigma_1 + \frac{\tau^3}{3}\Sigma_2 + \frac{\tau^5}{20}\Sigma_3 & \frac{\tau^2}{2}\Sigma_2 + \frac{\tau^4}{8}\Sigma_3 & \frac{\tau^3}{6}\Sigma_4\\ \frac{\tau^2}{2}\Sigma_2 + \frac{\tau^4}{8}\Sigma_3 & \tau \Sigma_2 + \frac{\tau^3}{3}\Sigma_3 & \frac{\tau^2}{2}\Sigma_4 \\ \frac{\tau^3}{6}\Sigma_4^{\top} & \frac{\tau^2}{2}\Sigma_4^{\top} & \tau\Sigma_5}, 
		\end{equation}
		where
		\begin{equation}
			\label{e:covariancecalculation}
			\begin{aligned}
				\Sigma_1&=\diag(\sigma^{2}_{11}, \sigma^{2}_{12},\ldots, \sigma^{2}_{1N}),\\
				\Sigma_2&=\diag(\sigma^{2}_{21}, \sigma^{2}_{22},\ldots, \sigma^{2}_{2N}),\\
				\Sigma_3 &=\diag(\underbrace{0, 0,\ldots, 0}_{N-M}, \sigma^{2}_{3N-M+1},\ldots, \sigma^{2}_{3N}),\\
				\Sigma_4 &= \mat{0 \\ \diag(\sigma^{2}_{3N-M+1}, \ldots, \sigma^{2}_{3N})} \in \R^{N\times M},\\
				\Sigma_5 &= \diag( \sigma^{2}_{3N-M+1},\ldots, \sigma^{2}_{3N}).
			\end{aligned}
		\end{equation}

		As mentioned in the Introduction, CKF continues to be widely applied in practical applications for generating time scales from clock ensembles.
		The CKF algorithm for the mixed ensemble is
		\begin{equation}
			\label{e:CKF}
			\begin{aligned}
				\mat{\hat{x}^{-}[k]\\ \hat{z}^{-}[k]} &= \mathcal{A}\mat{\hat{x}[k-1]\\ \hat{z}[k-1]} + \mathcal{B} u[k-1],\\
				P^{-}[k] &= \mathcal{A}P[k-1]\mathcal{A}^{\top} + Q,\\
				L[k] &= P^{-}[k] \mathcal{C}^{\top} \Big(\mathcal{C}P^{-}[k]\mathcal{C}^{\top} + rI_{N-1}\Big)^{-1},\\
				P[k] &= \Big(I_{2N+ M} - L[k]\mathcal{C}\Big) P^{-}[k],\\
				\mat{\hat{x}[k]\\ \hat{z}[k]} &= \mathcal{A}\mat{\hat{x}^{-}[k]\\ \hat{z}^{-}[k]} + L[k]\Big(y[k]- \mathcal{C}\mat{\hat{x}^{-}[k]\\ \hat{z}^{-}[k]}\Big),
			\end{aligned}
		\end{equation}
		where \(\mathcal{A}= \mat{
			A \otimes I_{N} && \beta \otimes J  \\
			0  && I_M
		}\), \(\mathcal{B}=\mat{
			B \otimes I_{N} \\
			0
		}\), \(\mathcal{C}=\mat{
			C \otimes V && 0
		}\). 
		In \eqref{e:CKF}, \(L[k]\) is the Kalman gain, and \(P^{-}[k]\) and \(P[k]\) are prior and posterior error covariance matrices, respectively.
		
		\noindent In \cite{Dey2024Synchronization}, we observed that the mixed ensemble model \eqref{e:atomicclockensemble} is unobservable with a \(2\)-dimensional unobservable subspace given by \(\image\mat{I_2\otimes \one_{N}\\ 0}\). 
		In fact, the mixed ensemble model is undetectable. 
		Due to undetectability, the error covariance matrix may diverge while running the algorithm.
		This may lead to numerical instability and inaccurate estimation. 
		As a consequence, we address the following problem for the mixed ensemble model \eqref{e:atomicclockensemble}.
		
		\noindent\textbf{Problem 1}: Find a Kalman filtering algorithm for the mixed ensemble \eqref{e:atomicclockensemble} which is free from any diverging terms?
		
		%%%%%%%%%%%%%%%%%%%%%%%%%%%%%%%%%%%%%%%%%%%%%%%%%%%%%%%%%%%%%%%%%%%%%%%%%%%%%%%%%%%%%%%%%%%%%%%%%%%%%%%%%%%%%%%%%%%%%%%%%%%%%%%%%%%
		\subsection{Hadamard Variance (HVAR)}
		\label{ss:hadamard Variance}
		\noindent HVAR is a three-sample variance and can be defined as the variance of the third-order
		difference between successive averages of phase deviation taken over sampling period specified; see \cite{zucca2005theclock, ishizaki2023higher-order}  for more details. 
		For a free-running clock \(i\), the third-order difference is given by%HVAR is used to quantify the fluctuations in frequency/phase of the clock
		\begin{equation*}
			\Delta_{m}^3 p_i[k] = p_i[k+3m] - 3p_i[k+2m] + 3 p_i[k+m] - p_i[k],
		\end{equation*}
		where \(m \geq 1\) is a step. 
		Then, the HVAR is defined as 
		\begin{equation}
			\label{e:hadamard}
			\sigma_{\Hada}^2(\tau) := \E\Big[ \frac{(\Delta_{1}^3 p_i[k])^2}{6\tau^2}\Big].
		\end{equation}
		%where \(\Delta_{1}^3 p_i[k]\) is the third-order difference of \(p_i[k]\). %the clock model is considered as an implicit function of the sampling interval \(\tau\) which can be multiple of \(\tau\), i.e., \(\tau= \ell \tau\).
		In fact, by using \eqref{e:hadamard}, the HVAR of a clock \(i\) can be obtained as \cite{ishizaki2023higher-order}
		\begin{equation}
			\label{e:HadamardFree-running}
			\sigma_{\Hada}^2(\tau) = \frac{1}{\tau} \sigma_{1i}^2 + \frac{\tau}{6}\sigma_{2i}^2 + \frac{11\tau^3}{120} \sigma_{3i}^2,
		\end{equation}
		where it is solely the function of the sampling interval \(\tau\). Note that  \(\sigma_{3i}=0\) for all \(i\in \mathcal{N}\).
		For a controlled clock \(i\), the HVAR is estimated from a clock reading sequence \(\lbrace p_i[0], p_i[1], \ldots, p_i[T]\rbrace\) measured over the sampling interval \(\tau\) as
		%Then, an estimate of HVAR can be computed as
		%\vspace{-0.2cm}
		\begin{equation*}
			\hat{\sigma}^{2}_{\Hada}(m\tau;\lbrace p_i[k]\rbrace):= \frac{1}{T-3m}\sum_{k=0}^{T-3m-1} \frac{(\Delta_m^3 p_i[k])^2}{6(m\tau)^2}.
		\end{equation*}
		%The Hadamard deviation is the square root of the 
		HVAR is represented as a function of the time interval \(m\tau\) using bilogarithmic plots. In this paper, HVAR is employed to assess the generated time scale in the numerical example.
		
		In addition to Problem 1, we consider the problem of generating a time scale from the mixed ensemble model.
		\begin{itemize}
			\item[$\circ$] \textbf{Problem 2(a)}: Generate a time scale by synchronizing the time deviations of all the clocks in the ensemble in the sense
			\begin{equation*}
				\lim_{k\to \infty} \E\big[ p[k]- \mathds{1}_N p^{*}[k]\big] = 0,
			\end{equation*}
			where \(p^{*}[k]\in \R\) is some synchronization state.
			The synchronized time shared by all the clocks is referred as the generated time scale (GTS).
			To improve the accuracy and precision of the GTS, the following two requirements are considered separately.
			\item[$\circ$] \textbf{Problem 2(b)}: Ensure that the GTS has the least short-term HVAR, i.e., we address the following optimization problem
			\begin{equation*}
				\min_{u_i[k]} \hat{\sigma}^{2}_{\Hada}(\tau;\lbrace p_i[k]\rbrace)\quad \forall i\in \mathcal{N} \cup \mathcal{M}.
			\end{equation*}
			\item[$\circ$] \textbf{Problem 2(c)}: We consider the following optimization problem to ensure least long-term HVAR of the GTS
			\begin{equation*}
				\min_{u_i[k]} \hat{\sigma}^{2}_{\Hada}(T;\lbrace p_i[k]\rbrace)\quad \forall i\in \mathcal{N} \cup \mathcal{M},
			\end{equation*}
			where \(T \gg \tau\) denote the longer time interval.
		\end{itemize}
		
		% \eqref{e:atomicclockensemble}.
		% \begin{enumerate}
			% \item Is it possible to modify the Kalman filtering algorithm on the mixed ensemble in such a way that it becomes free from any diverging terms?
			% \item Find a time scale generation algorithm to improve the short-term frequency stability or the long-term frequency stability of the generated time scale?
			% %\item Is it possible to simultaneously improve the short-term and long-term frequency stability of the generated time scale?
			% \end{enumerate}
		
		% %In this work, we also attempt to find techniques to simultaneously improve the short-term and long-term frequency stability of the generated time scale?
		% In this paper, we find answers to the aforementioned problems (1) and (2) by exploiting \emph{ideas from control theory}. 
		% Furthermore, we numerically illustrate that by using the unobservable state estimates, we can simultaneously improve the short-term and long-term frequency stability of the generated time scale.
		%%%%%%%%%%%%%%%%%%%%%%%%%%%%%%%%%%%%%%%%%%%%%%%%%%%%%%%%%%%%%%%%%%%%%%%%%%%%%%%%%%%%%%%%%%%%%%%%%%%%%%%%%%%%%%%%%%%%%%%%%%%%%%%%%%%%
		%%%%%%%%%%%%%%%%%%%%%%%%%%%%%%%%%%%%%%%%%%%%%%%%%%%%%%%%%%%%%%%%%%%%%%%%%%%%%%%%%%%%%%%%%%%%%%%%%%%%%%%%%%%%%%%%%%%%%%%%%%%%%%%%%%%%
		\vspace{-0.2cm}
		\section{Solution of Problem 1}
		\label{s:preliminaryresults}
		%In this section, our aim is to find a Kalman filtering algorithm for the atomic clock ensemble \eqref{e:atomicclockensemble} that does not contain any divergent term. 
		
		\noindent We begin with briefly discussing the observable canonical decomposition employed in \cite{Dey2024Synchronization}, to decompose the system into observable and unobservable states.
		We consider the following transformation
		\begin{equation}
			\label{e:decomposition}
			\mat{x[k]\\z[k]} = \underbrace{\mat{I_2\otimes V^{+} & 0_{2N\times M} & I_2\otimes \one_{N}\\ 0_{M\times 2(N-1)} & I_M & 0_{M\times 2}}}_{\scriptsize \mathbf{T}^{-1}}\mat{\obs[k]\\ \unob[k]},
		\end{equation}
		where \(\obs[k]\in \R^{2(N-1)+M}\) is the observable state, \(\unob[k]\in \R^{2}\) is the unobservable state, \(\mathbf{T}\) is the transformation matrix such that \(\mathbf{T}\mat{x[k]\\z[k]} = \mat{\obs[k]\\ \unob[k]}\). 
		In the transformation \(\mathbf{T}^{-1}\), the matrix \(V^{+}\in \R^{N\times (N-1)}\) is the generalized inverse of \(V\) associated with a weight vector \(q\in \R^{N}\) such that \(q^{\top}\one_N=1\).
		Specifically, we consider a full column rank matrix \(W\in \R^{N\times (N-1)}\) such that \(q^{\top}W = 0\), and we represent the generalized inverse as \(V^{+}= W(VW)^{-1}\). It is easy to see that \(VV^{+}=I_{N-1}\).
		Note that
		\begin{equation}
			\label{e:inversetransform}
			\mathbf{T}=\mat{I_2\otimes V & 0_{2(N-1)\times M}\\ 0_{M\times 2N} & I_M \\ I_2\otimes q^{\top} & 0_{2\times M}}.
		\end{equation}
		
		\noindent Note that \((I_2\otimes q^{\top})x[k]= \unob[k]\) which is the weighted average of clock phase and frequency states.
		Under transformation \(\mathbf{T}\), the new state-space realization is
		\begin{equation}
			\label{e:newstate-space}
			\begin{aligned}
				\mat{\obs[k+1]\\\unob[k+1]} &= \mat{\mathbf{A}_{oo} & 0\\ \mathbf{A}_{\bar{o}o} & A}\mat{\obs[k]\\\unob[k]} + \mat{\mathbf{B}_{o}\\ \mathbf{B}_{\bar{o}}}u[k] \\ &+ \mathbf{T}\mat{v^{x}[k]\\v^{z}[k]},\\
				y[k] &= \mat{\mathbf{C}_{o} & 0_{(N-1)\times 2}}\mat{\obs[k]\\\unob[k]} + w[k],
			\end{aligned}
		\end{equation}
		where 
		\begin{align*}
			\mathbf{A}_{oo} &= \mat{A\otimes I_{N-1} & \beta\otimes VJ \\ 0 & I_M},\\
			\mathbf{A}_{\bar{o}o}&= \mat{0_{2\times 2(N-1)} & \beta\otimes q^{\top}J},\; \mathbf{B}_{o} = \mat{B\otimes V\\ 0},\\
			\mathbf{B}_{\bar{o}} &= B\otimes q^{\top}, \mathbf{C}_{o} = \mat{C\otimes I_{N-1} & 0_{(N-1)\times M}}. 
		\end{align*}
		%%%%%%%%%%%%%%%%%%%%%%%%%%%%%%%%%%%%%%%%%%%%%%%%%%%%%%%%%%%%%%%%%%%%%%%%%%%%%%%%%%%%%%%%%%%%%%%%%%%%%%%%%%%%%%%%%%%%%%%%%%%%%%%%%%%
		%\vspace{-0.8cm}
		\subsection{Transformed Kalman Filter (TKF)}
		\label{ss:SS_TKF}
		
		\noindent In this subsection, we obtain the Kalman filtering algorithm for the transformed state-space realization \eqref{e:newstate-space} which does not have any diverging terms.
		For the new state-space realization, the steps of the Kalman filtering algorithm are
		%\begin{equation}
		%\label{e:newkalmafilterequations}
		\begin{small}
		\begin{align}
			\label{e:prior}
			\mat{\obspre[k]\\\unobpre[k]} &=  \mat{\mathbf{A}_{oo} & 0\\ \mathbf{A}_{\bar{o}o} & A}\mat{\obsest[k-1]\\\unobest[k-1]} + \mat{\mathbf{B}_{o}\\ \mathbf{B}_{\bar{o}}}u[k-1],\\
			\label{e:priorcovold}
			\breve{P}^{-}[k] &=  \mat{\mathbf{A}_{oo} & 0\\ \mathbf{A}_{\bar{o}o} & A} \breve{P}[k-1]\mat{\mathbf{A}_{oo} & 0\\ \mathbf{A}_{\bar{o}o} & A}^{\top} +\mathbf{T}Q\mathbf{T}^{\top},
		\end{align}
		\begin{equation}
			\begin{aligned}
				\label{e:kalmangainold}
				\mat{\obsgain[k]\\ \unobgain[k]} &= \breve{P}^{-}[k] \mat{\mathbf{C}_{o} & 0}^{\top}\Big(\mat{\mathbf{C}_{o} & 0}\breve{P}^{-}[k]\mat{\mathbf{C}_{o} & 0}^{\top}\\ & \hspace{4.5 cm} + rI_{N-1}\Big)^{-1},
			\end{aligned}
		\end{equation} 
		\begin{align}
			\label{e:postcovold}
			\breve{P}[k] &= \Big(I_{2N+M} - \mat{\obsgain[k]\\\unobgain[k]}\mat{\mathbf{C}_{o} & 0}\Big)\breve{P}^{-}[k]\\
			\label{e:post}
			\mat{\obsest[k]\\\unobest[k]} &= \mat{\obspre[k]\\\unobpre[k]} + 
\mat{\obsgain[k]\\\unobgain[k]}\Big(y[k] - \mat{\mathbf{C}_{o} & 0} \mat{\obspre[k]\\\unobpre[k]}\Big).
		\end{align}
		\end{small}
		%\end{equation}
		
		\noindent For further analysis, we denote \(\breve{P}^{-}[k]\) and \(\breve{P}[k]\) as
		\begin{equation*}
			\breve{P}^{-}[k] = \mat{P^{-}_{oo}[k] & (P^{-}_{\bar{o}o}[k])^{\top}\\ P^{-}_{\bar{o}o}[k] & P^{-}_{\bar{o}\bar{o}}[k]},\breve{P}[k] = \mat{P_{oo}[k] & P_{\bar{o}o}^{\top}[k]\\ P_{\bar{o}o}[k] & P_{\bar{o}\bar{o}}[k]} , 
		\end{equation*}
		where \(P^{-}_{oo}[k]\) and \(P_{oo}[k]\) are error covariance matrices associated with observable state, and \(P^{-}_{\bar{o}\bar{o}}[k]\) and \(P_{\bar{o}\bar{o}}[k]\) are error covariance matrices associated to unobservable state. \(P^{-}_{\bar{o}o}[k]\) and \(P_{\bar{o}{o}}[k]\) represent the estimation errors between the unobservable and observable state. \(\obsgain[k]\) is the observable gain and \(\unobgain[k]\) is the unobservable gain.

		Then, the following lemma shows that the computation of Kalman gains \(\obsgain[k]\) and \(\unobgain[k]\) depends only on \(P^{-}_{oo}[k]\) and \(P^{-}_{\bar{o}o}[k]\).
		
		\begin{lemma}
			\label{l:kalmanfilterformixedensemble}
			Consider the mixed ensemble \eqref{e:atomicclockensemble}, the observable canonical decomposition \eqref{e:decomposition}, and the transformed state-space realization \eqref{e:newstate-space}. Then the Kalman filtering algorithm for the transformed state-space is given as 
			%   \begin{equation}
				%\label{e:obtainedkalmafilterequations}
			
				\begin{align}
					\mat{\obspre[k]\\\unobpre[k]} &=  \mat{\mathbf{A}_{oo} & 0\\ \mathbf{A}_{\bar{o}o} & A}\mat{\obsest[k-1]\\\unobest[k-1]} + \mat{\mathbf{B}_{o}\\ \mathbf{B}_{\bar{o}}}u[k-1],\\
					\label{e:priorcovnew}
					\mat{\errorobspre[k]\\\errorunobpre[k]} &=  \mat{\mathbf{A}_{oo} & 0\\ \mathbf{A}_{\bar{o}o} & A}  \mat{\errorobsest[k-1]\\\errorunobest[k-1]}\mathbf{A}^{\top}_{oo} +\mat{\mathbf{T}_o Q \mathbf{T}_o^{\top}\\ \mathbf{T}_{\bar{o}} Q\mathbf{T}_{o}^{\top}},\\
					\label{e:kalmangainnew}
					\mat{\obsgain[k]\\\unobgain[k]} &=  \mat{\errorobspre[k]\\\errorunobpre[k]}\mathbf{C}^{\top}_{o}\Big(\mathbf{C}_{o}\errorobspre[k]\mathbf{C}_{o}^{\top} + rI_{N-1}\Big)^{-1},\\
					\label{e:postcovnew}
					\mat{\errorobsest[k]\\\errorunobest[k]} &= \mat{\Big(I_{2(N-1)+M} - \obsgain[k]\mathbf{C}_{o}\Big)\errorobspre[k]\\ \errorunobpre[k]\Big(I_{2(N-1) + M}- \mathbf{C}^{\top}_{o}\obsgain^{\top}[k]\Big)}\\
					\mat{\obsest[k]\\\unobest[k]} &= \mat{\obspre[k]\\\unobpre[k]} +\mat{\obsgain[k]\\\unobgain[k]}\Big(y[k] - \mathbf{C}_{o} {\obspre[k]}\Big),
				\end{align}
		
				where
				\[
				\mathbf{T}_{o} = \mat{I_2\otimes V & 0 \\ 0 & I_M},\; \mathbf{T}_{\bar{o}} = \mat{I_2\otimes \weight^{\top} & 0}.
				\]
				%\end{equation}
			\end{lemma}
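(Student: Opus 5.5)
The plan is a direct block computation starting from the generic equations \eqref{e:prior}--\eqref{e:post}. We write \(\mathcal{F}:=\mat{\mathbf{A}_{oo} & 0\\ \mathbf{A}_{\bar{o}o} & A}\) and partition \(\breve{P}^{-}[k]\) and \(\breve{P}[k]\) as in the text. We also split \(\mathbf{T}\) row-wise as \(\mathbf{T}=\mat{\mathbf{T}_o\\ \mathbf{T}_{\bar{o}}}\). Reading off \eqref{e:inversetransform}, the first \(2(N-1)+M\) rows are exactly \(\mat{I_2\otimes V & 0\\ 0 & I_M}=\mathbf{T}_o\), and the last two rows are \(\mat{I_2\otimes q^{\top} & 0}=\mathbf{T}_{\bar{o}}\). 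Hence the first block column of \(\mathbf{T}Q\mathbf{T}^{\top}\) is \(\mat{\mathbf{T}_oQ\mathbf{T}_o^{\top}\\ \mathbf{T}_{\bar{o}}Q\mathbf{T}_o^{\top}}\).

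First, the gain. Since the output matrix is \(\mat{\mathbf{C}_o & 0}\), we have \(\breve{P}^{-}[k]\mat{\mathbf{C}_o & 0}^{\top}=\mat{\errorobspre[k]\\ \errorunobpre[k]}\mathbf{C}_o^{\top}\) and \(\mat{\mathbf{C}_o & 0}\breve{P}^{-}[k]\mat{\mathbf{C}_o & 0}^{\top}=\mathbf{C}_o\errorobspre[k]\mathbf{C}_o^{\top}\). Substituting these into \eqref{e:kalmangainold} gives \eqref{e:kalmangainnew}. By the same substitution, the innovation in \eqref{e:post} reduces to \(y[k]-\mathbf{C}_o\obspre[k]\), and the prior-state equation is unchanged.

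Second, the posterior covariance. Expanding \eqref{e:postcovold} block-wise, the first block column of \(\breve{P}[k]\) is
\(\mat{\errorobspre[k]\\ \errorunobpre[k]}-\mat{\obsgain[k]\\ \unobgain[k]}\mathbf{C}_o\errorobspre[k]\).
The top block is \((I-\obsgain[k]\mathbf{C}_o)\errorobspre[k]\), which matches \eqref{e:postcovnew}. The bottom block is \(\errorunobpre[k]-\unobgain[k]\mathbf{C}_o\errorobspre[k]\). To put it in the stated form, we use \(\unobgain[k]=\errorunobpre[k]\mathbf{C}_o^{\top}S^{-1}\), with \(S:=\mathbf{C}_o\errorobspre[k]\mathbf{C}_o^{\top}+rI\) symmetric, together with the symmetry of \(\errorobspre[k]\). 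These give \(\mathbf{C}_o^{\top}\obsgain^{\top}[k]=\mathbf{C}_o^{\top}S^{-1}\mathbf{C}_o\errorobspre[k]\). Therefore \(\unobgain[k]\mathbf{C}_o\errorobspre[k]=\errorunobpre[k]\mathbf{C}_o^{\top}\obsgain^{\top}[k]\), and the bottom block equals \(\errorunobpre[k](I-\mathbf{C}_o^{\top}\obsgain^{\top}[k])\). This identity is the one non-mechanical step, and it relies only on the symmetry of \(\breve{P}^{-}[k]\), which holds by induction from \eqref{e:priorcovold}.

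Third, the prior covariance. Because the upper-right block of \(\mathcal{F}\) is zero, the first block column of \(\mathcal{F}\breve{P}[k-1]\mathcal{F}^{\top}\) is \(\mathcal{F}\mat{\errorobsest[k-1]\\ \errorunobest[k-1]}\mathbf{A}_{oo}^{\top}\). To see this, note that \(\mathcal{F}^{\top}\) has first block column \(\mat{\mathbf{A}_{oo}^{\top}\\ 0}\). Multiplying \(\breve{P}[k-1]\) on the right by it picks out only the first block column of \(\breve{P}[k-1]\), times \(\mathbf{A}_{oo}^{\top}\). Adding the first block column of \(\mathbf{T}Q\mathbf{T}^{\top}\) yields \eqref{e:priorcovnew}.

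Finally, the recursions \eqref{e:priorcovnew}--\eqref{e:postcovnew} form a closed system in the first block column alone. The blocks \(P^{-}_{\bar{o}\bar{o}}\) and \(P_{\bar{o}\bar{o}}\), which are the ones that diverge, never enter the gains or the state updates, so they can be dropped. This gives the algorithm in the statement.
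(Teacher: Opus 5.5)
Your proof is correct and uses essentially the same block-partition argument as the paper: the zero output block on the unobservable state gives the gains, and the zero upper-right block of the system matrix gives the prior covariance. The only difference is in the posterior cross-covariance. The paper reads off the upper-right block $(I-\obsgain[k]\mathbf{C}_o)(\errorunobpre[k])^{\top}$ and transposes it using symmetry of $\breve{P}[k]$. You instead read off the lower-left block and convert it via the gain identity $\unobgain[k]\mathbf{C}_o\errorobspre[k]=\errorunobpre[k]\mathbf{C}_o^{\top}\obsgain^{\top}[k]$, which is equivalent.
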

			\vspace{0.5cm}
			\begin{proof}
				In \eqref{e:priorcovold}, if we further analyze then we observe that the update of \(\errorobspre[k]\) and \(\errorunobpre[k]\) are independent of \(P_{\bar{o}\bar{o}}[k]\), which leads to \eqref{e:priorcovnew}.
				
				Note that
				\begin{equation*}
					\mat{P^{-}_{oo}[k] & (P^{-}_{\bar{o}o}[k])^{\top}\\ P^{-}_{\bar{o}o}[k] & P^{-}_{\bar{o}\bar{o}}[k]}\mat{\mathbf{C}^{\top}_o\\ 0_{2\times (N-1)}} = \mat{P^{-}_{oo}[k]\\ P^{-}_{\bar{o}o}[k]}\mathbf{C}^{\top}_o.
				\end{equation*}
				This shows that the Kalman gains \(\obsgain[k]\) and \(\unobgain[k]\) does not depend on \(P_{\bar{o}\bar{o}}[k]\), and result in \eqref{e:kalmangainnew}.
				On further calculation in \eqref{e:postcovold}, we get \(\errorunobest^{\top}[k]= (I_{2(N-1)+M}- \obsgain[k]\mathbf{C}_o){\errorunobpre}^{\top}[k]\), which leads to \eqref{e:postcovnew}.
			\end{proof}
			\noindent Note that \(P^{-}_{\bar{o}\bar{o}}[k]\) and \(P_{\bar{o}\bar{o}}[k]\) are responsible for the divergence of the error covariance matrices, and can be removed from the Kalman filtering algorithm for the mixed ensemble \eqref{e:atomicclockensemble} as shown in \eqref{e:priorcovnew} and \eqref{e:postcovnew}.
			%In this work, the estimates of the unobservable state would be beneficial to improve the frequency stability of the time scale generated.
			
			Although the error covariance matrices of CKF \eqref{e:CKF} diverges, the error covariance matrices \(P^{-}_{oo}[k]\) and \(P^{-}_{\bar{o}o}[k]\) converges to its steady-state value\cite{Burridge1987convergence}. 
			Regarding the observable state, we denote the steady-state error covariance and steady-state observable gain as
			\begin{equation}
				\label{e:steadystateobservable gain}
				\begin{aligned}
					\lim_{k\to \infty} P^{-}_{oo}[k]& = P^{*}_{oo},\\ \lim_{k\to \infty} P_{oo}[k] &= \Big(I_{2(N-1)+ M}- L_o^{*}\mathbf{C}_{o}\Big)P^{*}_{oo}
				\end{aligned}
			\end{equation}
			\begin{equation}
				\label{e:steady-stateobservablegain}
				\lim_{k\to \infty} L_{o}[k]= L^{*}_{o}=P^{*}_{oo}\mathbf{C}^{\top}_{o}\Big(\mathbf{C}_oP^{*}_{oo}\mathbf{C}^{\top}_{o} + rI_{N-1}\Big)^{-1}, 
			\end{equation}
			where \(P_{oo}^{*}\) is the unique solution to the Riccatti equation
			\begin{equation*}
				\begin{aligned}
					P^{*}_{oo} = \mathbf{T}_{o}&Q\mathbf{T}^{\top}_{o}+ \mathbf{A}_{oo}P_{oo}^{*}\mathbf{A}_{oo}^{\top} \\- &\mathbf{A}_{oo}P_{oo}^{*}\mathbf{C}^{\top}_{o}\Big(\mathbf{C}_oP^{*}_{oo}\mathbf{C}^{\top}_{o} + rI_{N-1}\Big)^{-1}\mathbf{C}_{o}P_{oo}^{*}\mathbf{A}_{oo}^{\top}. 
				\end{aligned}
			\end{equation*}
			
			For the calculation of the steady-state unobservable Kalman gain, we consider
			\begin{equation}
				\label{e:steadystateunpriorcovariance}
				\begin{aligned}
					\lim_{k\to \infty} P^{-}_{\bar{o}o}[k]& = P^{*}_{\bar{o}o}\\
				\end{aligned}
			\end{equation}
			By \eqref{e:priorcovnew} and \eqref{e:postcovnew}, it holds that
			\begin{equation}
				\label{e:steadystateunobservablecovarianceintermediate}
				\begin{aligned}
					P^{*}_{\bar{o}o}=  \mathbf{T}_{\bar{o}}Q\mathbf{T}^{\top}_{o}& + AP^{*}_{\bar{o}o}\mathbf{A}^{\top}_{oo} - AP^{*}_{\bar{o}o} \mathbf{C}^{\top}_{o}(L^{*}_{o})^{\top}\mathbf{A}^{\top}_{oo} \\ + \mathbf{A}_{\bar{o}o} P_{oo}^{*}\mathbf{A}^{\top}_{oo} &- \mathbf{A}_{\bar{o}o}L^{*}_{o}\mathbf{C}_{o}  P^{*}_{oo}\mathbf{A}^{\top}_{oo}.
				\end{aligned}
			\end{equation}
			The steady-state unobservable Kalman gain \(L^{*}_{\bar{o}}\) is given by
			\begin{equation}
				\label{e:steady-stateunobservablegain}
				L^{*}_{\bar{o}} = P^{*}_{\bar{o}o} \mathbf{C}^{\top}_{o}\Big(\mathbf{C}_oP^{*}_{oo}\mathbf{C}^{\top}_{o} + rI_{N-1}\Big)^{-1}.
			\end{equation}
			
			\noindent Then the steady-state Kalman filter for the mixed ensemble \eqref{e:atomicclockensemble} in the transformed state-space realization is as follows.
			\begin{theorem}
				\label{t:steady-stateKalman filter}
				Consider the mixed ensemble model \eqref{e:atomicclockensemble}, the observable canonical decomposition \eqref{e:decomposition}, transformed state-space realization \eqref{e:newstate-space}. Then the steady-state Kalman filtering algorithm for the transformed state-space is given as 
				\begin{equation}
					\label{e:transformedsteadystateKF}
					\begin{aligned}
						\mat{\obspre[k]\\\unobpre[k]} &=  \mat{\mathbf{A}_{oo} & 0\\ \mathbf{A}_{\bar{o}o} & A}\mat{\obsest[k-1]\\\unobest[k-1]} + \mat{\mathbf{B}_{o}\\ \mathbf{B}_{\bar{o}}}u[k-1],\\
						\mat{\obsest[k]\\\unobest[k]} &= \mat{\obspre[k]\\\unobpre[k]} +\mat{\obsgain^{*}\\\unobgain^{*}}\Big(y[k] - \mathbf{C}_{o} {\obspre[k]}\Big),
					\end{aligned}
				\end{equation}
				where the observable and unobservable Kalman gains are given in \eqref{e:steady-stateobservablegain} and \eqref{e:steady-stateunobservablegain}, respectively. 
			\end{theorem}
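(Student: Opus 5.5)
The plan is to obtain Theorem~\ref{t:steady-stateKalman filter} by passing to the limit $k\to\infty$ in the recursions of Lemma~\ref{l:kalmanfilterformixedensemble}. The state prediction and state update equations there already have the form of \eqref{e:transformedsteadystateKF}, except that the gains $\obsgain[k]$, $\unobgain[k]$ depend on $k$. So the whole task is to show that $\obsgain[k]\to L_o^{*}$ and $\unobgain[k]\to L_{\bar{o}}^{*}$, with the limits given by \eqref{e:steady-stateobservablegain} and \eqref{e:steady-stateunobservablegain}. Replacing the time-varying gains by these limits then gives \eqref{e:transformedsteadystateKF}.

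First I will treat the observable block. From \eqref{e:priorcovnew} and \eqref{e:postcovnew}, $\errorobspre[k]$ satisfies a standard Riccati recursion for the triple $(\mathbf{A}_{oo},\mathbf{C}_{o},\mathbf{T}_oQ\mathbf{T}_o^{\top})$. This recursion is decoupled from $\errorunobpre[k]$ and $P_{\bar o\bar o}$.

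I will check that $(\mathbf{C}_o,\mathbf{A}_{oo})$ is detectable; in fact it is observable, since by construction it is the observable part of the canonical decomposition. For the noise side, the modes of $\mathbf{A}_{oo}$ all lie at eigenvalue $1$. I will use the fact that $Q$ is positive definite, because $\sigma_{1i},\sigma_{2i},\sigma_{3j}>0$, so $\mathbf{T}_oQ\mathbf{T}_o^{\top}>0$ and stabilizability holds. By the standard convergence theorem for the discrete Riccati equation, cited as \cite{Burridge1987convergence}, $\errorobspre[k]\to P_{oo}^{*}$, the unique stabilizing solution of the stated algebraic Riccati equation. It also follows that $\mathbf{A}_{oo}(I-L_o^{*}\mathbf{C}_o)$ is Schur. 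Continuity of matrix inversion, with $\mathbf{C}_oP\mathbf{C}_o^{\top}+rI\succ 0$, then gives $\obsgain[k]\to L_o^{*}$.

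Second, I will handle the cross block. Substituting \eqref{e:postcovnew} into \eqref{e:priorcovnew} gives
\[
\errorunobpre[k]=A\,\errorunobpre[k-1]\,F[k-1]^{\top}+G[k-1],
\]
where $F[k]:=\mathbf{A}_{oo}(I-\obsgain[k]\mathbf{C}_o)$ and $G[k]:=\mathbf{A}_{\bar o o}(I-\obsgain[k]\mathbf{C}_o)\errorobspre[k]\mathbf{A}_{oo}^{\top}+\mathbf{T}_{\bar o}Q\mathbf{T}_o^{\top}$. This is a linear Sylvester-type (Stein) recursion, and $G[k]\to G^{*}$ converges by the first step.

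This step is the main obstacle, because $A$ is not Schur: it is a Jordan block at eigenvalue $1$. So convergence cannot come from $A$ alone and must come from the product structure. In vectorized form the operator is $F^{*}\otimes A$, whose eigenvalues are $\lambda_i(F^{*})\cdot 1$. Since $F^{*}$ is Schur, $F^{*}\otimes A$ is Schur as well, with spectral radius $\rho(F^{*})<1$; the Jordan structure of $A$ only adds polynomial factors, which are dominated by the geometric decay. A standard perturbation argument for asymptotically time-invariant linear recursions with converging inputs then yields $\errorunobpre[k]\to P_{\bar o o}^{*}$. The limit is the unique solution of \eqref{e:steadystateunobservablecovarianceintermediate}, and uniqueness holds because $I-F^{*}\otimes A$ is invertible.

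Finally, \eqref{e:kalmangainnew} gives $\unobgain[k]\to L_{\bar o}^{*}$ as in \eqref{e:steady-stateunobservablegain}, which completes the argument.
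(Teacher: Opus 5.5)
Your proposal is correct and uses the same key step as the paper's proof. Both vectorize the cross-covariance equation and note that the eigenvalues of $\mathbf{A}_{oo}(I-L_o^{*}\mathbf{C}_o)\otimes A$ are those of the Schur matrix $\mathbf{A}_{oo}(I-L_o^{*}\mathbf{C}_o)$ times the unit eigenvalues of $A$; the paper uses this only to show $P^{*}_{\bar{o}o}$ is uniquely determined and takes convergence of the covariances from the cited literature, whereas you also verify the Riccati hypotheses ($Q\succ 0$ and observability of $(\mathbf{C}_o,\mathbf{A}_{oo})$, which a PBH check confirms since $N-M\ge 1$) and prove that $P^{-}_{\bar{o}o}[k]$ actually converges, so your version is more complete.
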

			\begin{proof}
				For the observable state, the steady-state Kalman filtering algorithm is the same as that of the conventional Kalman filtering algorithm on the observable state.
				
				For the unobservable part, we prove that \(\errorunobest^{*}\) is uniquely determined.
				We can write \(\errorunobest^{*}\) in \eqref{e:steadystateunobservablecovarianceintermediate} as
				\begin{equation}
					\begin{aligned}
						\label{e:above}
						\errorunobest^{*} = \mathbf{T}_{o}Q\mathbf{T}^{\top}_{o} &+ A\errorunobest^{*}\Big(I_{2(N-1)+M} - L^{*}_{o}\mathbf{C}_o\Big)^{\top}\mathbf{A}_{oo}^{\top}\\ &+ \mathbf{A}_{\bar{o}o}\Big(I_{2(N-1)+M} - L^{*}_{o}\mathbf{C}_o\Big)\errorobsest^{*} \mathbf{A}^{\top}_{oo}  
					\end{aligned}
				\end{equation}
				By applying vectorization of the matrices, we can obtain column expansion of \eqref{e:above} as
				\begin{align*}
					\vectorize(\errorunobest^{*}) = \Big ( \mathbf{A}_{oo}&(I_{2(N-1)+M} - L^{*}_{o}\mathbf{C}_o)\otimes A \Big ) \vectorize(\errorunobest^{*}) \\ 
					+ \vectorize\Big(\mathbf{T}_{o}Q\mathbf{T}^{\top}_{o} & + \mathbf{A}_{\bar{o}o}\Big(I_{2(N-1)+M} - L^{*}_{o}\mathbf{C}_o \Big) \errorobsest^{*}\mathbf{A}^{\top}_{oo}\Big).
				\end{align*}
				
				\noindent The eigenvalues of the matrix \(\mathbf{A}_{oo}(I-L^{*}_{o}\mathbf{C}_o)\otimes A\) is given by the multiplication of the eigenvalues of 
				the matrix \(\mathbf{A}_{oo}(I-L^{*}_{o}\mathbf{C}_o)\) and eigenvalues of \(A\). 
				Since \(L^{*}_{o}\) is the steady-state Kalman gain of the observable part, each eigenvalue of \(\mathbf{A}_{oo}(I-L^{*}_{o}\mathbf{C}_o)\), \(\lambda\) satisfies \(\abs{\lambda} < 1\). Note that all the eigenvalues of \(A\) are one. 
				As a consequence, all the eigenvalues of \(\mathbf{A}_{oo}(I-L^{*}_{o}\mathbf{C}_o)\otimes A\) are within the unit circle.
				This confirms the uniqueness of \(\errorunobest^{*}\).
			\end{proof}
			
			%We call the obtained steady-state Kalman filtering algorithm for the mixed ensemble model \eqref{e:atomicclockensemble} as steady-state transformed Kalman filtering algorithm \eqref{e:transformedsteadystateKF}.
			Hence, we have obtained the solution to Problem 1.
			\begin{remark}
				\label{r:note on SS-TKF}
				In \cite{Ishizaki2025Explicit}, a Kalman filtering algorithm which is free from any diverging terms is obtained for an atomic clock ensemble consisting of only Cs-type clocks. 
				The transformed Kalman filtering algorithm, we obtain, is a generalization of the algorithm given in \cite{Ishizaki2025Explicit}.
				However, the steps employed for obtaining the TKF and its steady-state form in both works are similar.
			\end{remark}

			%%%%%%%%%%%%%%%%%%%%%%%%%%%%%%%%%%%%%%%%%%%%%%%%%%%%%%%%%%%%%%%%%%%%%%%%%%%%%%%%%%%%%%%%%%%%%%%%%%%%%%%%%%%%%%%%%%%%%%%%%%%%%%%%%%%
			%\vspace{-0.2cm}
			\section{Solution of Problem 2}
			\label{s:solutionofProb2}
			\subsection{Explicit Ensemble Mean Synchronization Algorithm}
			\label{s:ensemble mean snchronization algorithm}
			\noindent In this subsection, we provide details of the explicit ensemble mean synchronization algorithm for generating a time scale.
			In particular, to achieve synchronization, we consider the control input as
			\begin{equation}
				\label{e:controlinput}
				u[k] = V^{+} \varphi[k].% + \one_{N}\bar{\omega}[k].
			\end{equation}
			
			To efficiently design the control input \(\varphi[k]\) in \eqref{e:controlinput}, we employ the idea of combining state-feedback control with steady-state transformed Kalman filter \eqref{e:transformedsteadystateKF}.% obtained in Subsection \ref{ss:SS_TKF}.
			
			Based on the observable state estimates, we design the control \(\varphi[k]\) for synchronization as
			\begin{equation}
				\label{e:synccontrol}
				\varphi[k] = -\mat{F\otimes I_{N-1} & D\otimes VJ}\obsest[k],
			\end{equation}
			where \(\obsest[k]\) is computed from the steady-state transformed Kalman filter via \eqref{e:transformedsteadystateKF} and the feedback gains \(F\in \R^{1\times 2}\) and \(D\in \R\) are to be chosen later. 
			% \begin{equation}
				%     \obspre[k] = \mathbf{A}_{oo} \obspre[k-1] + \mathbf{B}_{o}u[k-1] + \mathbf{A}_{oo}\obsgain^{*} (y[k] - \mathbf{C}_{o}\obspre[k]),
				% \end{equation}
			
			Before we state our synchronization result, we have the following remark.
			\begin{remark}
				\label{r:effectofdriftonunobservable}
				Let us closely look at the dynamics of the observable state \(\obs[k]\in \R^{2(N-1)+ M}\). 
				Suppose we partition \(\obs[k]=\mat{\xi_{o}[k]\\ \nu_o[k]}\) where \(\xi_{o}[k]\in \R^{2(N-1)}\) and \(\nu_o[k]\in \R^{M}\).
				Then, on the basis of the observable canonical decomposition \eqref{e:decomposition}, we see that \(\nu_o[k] = z[k]\) which is a column vector containing the frequency drift state of all the Hm-type clocks.
				Then, the unobservable state dynamics is
				\begin{equation}
					\label{e:modified unobservable state}
					\begin{aligned}
						\unob[k+1] &= A\unob[k] + (\beta\otimes q^{\top}J)\nu_o[k] \\ &+ (I_2\otimes q^{\top})v^{x}[k] + (B\otimes q^{\top})u[k],
					\end{aligned}
				\end{equation}
				where the unobservable state is effected by the observable state via \(\nu_o[k]\).
			\end{remark}
			
			Then the following result gives the solution of Problem 2(a).
			
			\begin{theorem}
				\label{t:synctheorem}
				Consider the mixed ensemble \eqref{e:atomicclockensemble} and the observable canonical decomposition \eqref{e:decomposition}. 
				Let the input \(u[k]\) is taken as \eqref{e:controlinput} with \(\varphi[k]\) given in \eqref{e:synccontrol}. 
				Let the feedback gain matrices are structured as
				\begin{equation}
					F = \mat{\frac{\gamma}{\tau}  & 1},\quad D=\frac{\tau}{2},
				\end{equation}
				where \(\gamma\) is a scalar parameter. Then
				\begin{equation}
					\label{e:mainsyncresult}
					\lim_{k\to \infty} \E\Big[p[k] - \one_{N} \theta_{\bar{o}}[k]\Big] = 0
				\end{equation}
				if and only if \(\abs{1-\gamma} < 1\), where the dynamics of the synchronization destination (unobservable state) is
				\begin{equation}
					\label{e:syncstate}
					\simode{
						\unob[k+1] &= A\unob[k] + (\beta\otimes q^{\top}J)\nu_o[k] + (I_2\otimes q^{\top})v^{x}[k],\\
						\theta_{\bar{o}}[k] &= C\unob[k].
					}
				\end{equation}
			\end{theorem}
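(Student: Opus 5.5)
The plan is to split the closed loop along the observable canonical decomposition \eqref{e:decomposition}. I would show that the input never reaches the unobservable coordinates. Then I would show that the synchronization error is exactly a linear image of the phase-difference block of $\obs[k]$, and that the mean of this block obeys a scalar recursion with pole $1-\gamma$, forced only by a vanishing estimation-error term.

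\emph{Step 1 (unobservable part is free-running).} Since $V^{+}=W(VW)^{-1}$ and $q^{\top}W=0$, we have $q^{\top}V^{+}=0$. Hence $\mathbf{B}_{\bar o}u[k]=(B\otimes q^{\top})V^{+}\varphi[k]=0$. Substituting this into \eqref{e:modified unobservable state} of Remark~\ref{r:effectofdriftonunobservable} gives exactly \eqref{e:syncstate}. Next, from $\mathbf{T}^{-1}$ in \eqref{e:decomposition} the phase block reads
\[
p[k]=V^{+}\xi^{p}_{o}[k]+\one_N\, C\unob[k],
\]
where $\xi_o=\mat{(\xi^{p}_o)^{\top} & (\xi^{f}_o)^{\top}}^{\top}$ with $\xi^{p}_o,\xi^{f}_o\in\R^{N-1}$. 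Therefore
\[
p[k]-\one_N\theta_{\bar o}[k]=V^{+}\xi^{p}_o[k].
\]
Because $V^{+}$ has full column rank ($VV^{+}=I_{N-1}$), \eqref{e:mainsyncresult} is equivalent to $\E[\xi^{p}_o[k]]\to 0$.

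\emph{Step 2 (closed-loop observable dynamics).} Using $VV^{+}=I$, the input enters the observable part as $\mathbf{B}_o u=\mat{B\otimes I_{N-1}\\0}\varphi$. I would write $\obsest=\obs-e$, where $e[k]$ is the posterior estimation error of the steady-state filter \eqref{e:transformedsteadystateKF}. The prior error obeys $e^{-}[k+1]=\mathbf{A}_{oo}(I-L_o^{*}\mathbf{C}_o)e^{-}[k]+(\text{zero-mean noise})$. By the argument in the proof of Theorem~\ref{t:steady-stateKalman filter}, this matrix is Schur, so $\E[e[k]]\to 0$ exponentially, independently of $\gamma$. Taking expectations, with $\bar\xi=\E[\xi_o]$ and $\bar\nu=\E[\nu_o]$, gives
\[
\bar\xi[k+1]=\big((A-BF)\otimes I_{N-1}\big)\bar\xi[k]+\big((\beta-BD)\otimes VJ\big)\bar\nu[k]+\epsilon[k],
\]
where $\epsilon[k]\to 0$ exponentially and $\bar\nu[k]=\E[z[0]]$ is constant. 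With the prescribed gains,
\[
A-BF=\mat{1-\gamma & 0\\ -\gamma/\tau & 0},\qquad \beta-BD=\mat{0\\ \tau/2}.
\]
The phase row therefore decouples completely:
\[
\bar\xi^{p}[k+1]=(1-\gamma)\bar\xi^{p}[k]+\epsilon^{p}[k].
\]
The drift affects only the frequency row and never the phase. This cancellation of the drift in the phase row is the reason for the choice $D=\tau/2$, and it is the step I would verify most carefully.

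\emph{Step 3 (iff).} If $\abs{1-\gamma}<1$, the recursion is a stable scalar filter driven by an exponentially vanishing input, so $\bar\xi^{p}[k]\to 0$. Conversely, if $\abs{1-\gamma}\ge 1$, I would choose an initial condition with the filter initialized so that $\E[e]\equiv 0$ (for example, an exact initial estimate) and $\bar\xi^{p}[0]\ne 0$. Then $\abs{\bar\xi^{p}[k]}=\abs{1-\gamma}^{k}\abs{\bar\xi^{p}[0]}\not\to 0$, so the condition is necessary for the claim to hold for all initial states.

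I expect the main obstacle to be bookkeeping the coupling between the estimation error and the control. One must check that the $\gamma$-independent error dynamics are exactly the Schur matrix from Theorem~\ref{t:steady-stateKalman filter}, which is the separation principle in this transformed setting. One must also confirm that the error enters the phase row only additively, so that it cannot alter the pole $1-\gamma$.
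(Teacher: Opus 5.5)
Your proposal is correct and follows essentially the same route as the paper. Both use $q^{\top}V^{+}=0$ to kill the input in the unobservable coordinates, the identity $p[k]-\one_N\theta_{\bar o}[k]=V^{+}\xi^{p}_{o}[k]$, and the computations of $A-BF$ and $\beta-DB$ that isolate the scalar pole $1-\gamma$ in the phase row. Your version is slightly tighter: you show explicitly that the mean estimation error evolves under the $\gamma$-independent Schur matrix $\mathbf{A}_{oo}(I-L_o^{*}\mathbf{C}_o)$, and you give a concrete initial condition for the ``only if'' direction, whereas the paper only asserts both points (it also discusses bounded variance, which the statement does not require).
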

			\begin{proof}
				Note that \(q^{\top}V^{+}=0\). Then \((B\otimes q^{\top})u[k]=0\) and the dynamics of the unobservable state becomes as in \eqref{e:syncstate}.
				Recall that  \(\obs[k]=\mat{\xi_{o}[k]\\ \nu_o[k]}\) where \(\xi_{o}[k]\in \R^{2(N-1)}\) and \(\nu_o[k]\in \R^{M}\).
				For our analysis, we can further partition \(\xi_{o}[k]\) into  \(\xi_o[k] = \mat{\xi^{p}_{o}[k] \\ \xi^{f}_{o}[k]}\), where \(\xi^{p}_{o}[k]\in \R^{N-1}\) and \(\xi^{f}_{o}[k]\in \R^{N-1}\). By the observable canonical decomposition \eqref{e:decomposition}, we see that \(p[k] - \one_{N} \theta_{\bar{o}}[k] = V^{+}\xi^{p}_{o}[k]\).
				% \begin{equation*}
					%     p[k] - \one_{N} \theta_{\bar{o}}[k] = V^{+}\xi^{p}_{o}[k]
					% \end{equation*}
				Thus, to prove \eqref{e:mainsyncresult}, we have to show that \(\E[\xi^{p}_{o}[k]] = 0\) and \(\xi^{p}_{o}[k]\) has bounded variance.
				
				The dynamics of the observable state  \(\obs[k]=\mat{\xi_{o}[k]\\ \nu_o[k]}\) under the input \(u[k]\) is 
				\begin{equation}
					\begin{aligned}
						\mat{\xi_{o}[k+1]\\ \nu_o[k+1]} &= \mat{A\otimes I_{N-1} & \beta\otimes VJ \\ 0_{M\times 2(N-1)} & I_M}\mat{\xi_{o}[k]\\ \nu_o[k]}\\ & + \mat{B\otimes V\\ 0_{M\times N}}u[k]+ \mat{(I_2\otimes V)v^{x}[k]\\ v^{z}[k]}\\
						&= \mat{A\otimes I_{N-1} & \beta\otimes VJ \\ 0 & I_M}\mat{\xi_{o}[k] \\ \nu_{o}[k]} \\ -\mat{B\otimes I_{N-1}\\ 0_{M\times (N-1)}}&\mat{F\otimes I_{N-1} & D\otimes VJ}\mat{\hat{\xi}_{o}[k]\\ \hat{\nu}_o[k]} \\ &+ \mat{(I_2\otimes V)v^{x}[k]\\ v^{z}[k]}
					\end{aligned}
				\end{equation}
				Let the state estimation error be
				\begin{equation*}
					\mat{e_{\xi_{o}}[k]\\e_{\nu_{o}}[k]}=\mat{\xi_{o}[k]\\\nu_{o}[k]}-\mat{\hat{\xi}_{o}[k]\\\hat{\nu} _{o}[k]}.\;\text{Then} 
				\end{equation*}
				% Then 
				\begin{small}
					\begin{equation*}
						\begin{aligned}
							\mat{\xi_{o}[k+1]\\\nu_o[k+1]}=\mat{(A-BF)\otimes I_{N-1} & (\beta-DB)\otimes VJ\\ 0 & I_M}\mat{\xi_{o}[k]\\\nu_{o}[k]} \\+\mat{BF\otimes I_{N-1} & DB\otimes VJ\\ 0 & 0} \mat{e_{\xi_{o}}[k]\\e_{\nu_{o}}[k]} + \mat{(I_2\otimes V)v^x[k] \\ v^z[k]}. 
						\end{aligned}
					\end{equation*}
				\end{small}   
				We see that 
				\begin{equation*}
					A-BF=\mat{1 & \tau\\ 0 & 1}-\mat{\tau\\ 1}\mat{\frac{\gamma}{\tau}& 1}=\mat{1-\gamma & 0 \\ -\frac{\gamma}{\tau} & 0}.   
				\end{equation*}
				\begin{equation*}
					\beta-DB=\mat{\frac{\tau^2}{2}\\\tau}-\frac{\tau}{2}\mat{\tau\\ 1}=\mat{0 \\ \frac{\tau}{2}}.\;\text{Then}
				\end{equation*}
				%Then 
				\begin{equation}
					\label{e:intermediateproof}
					\begin{aligned}
						\xi_{o}[k+1]&=\Big(\mat{1-\gamma & 0 \\ -\frac{\gamma}{\tau} & 0}\otimes I_{N-1}\Big){\xi_{o}[k]}\\&+ 
						\Big(\mat{0\\ \frac{\tau}{2}}\otimes VJ\Big)\nu_o[k]+\Big(\mat{\gamma & \tau\\ 
							\frac{\gamma}{\tau} & 1}\otimes I_{N-1}\Big)e_{\xi_{o}}[k] \\
						&+ \Big(\mat{\frac{\tau^2}{2}\\ \frac{\tau}{2}}\otimes VJ\Big)e_{\nu_{o}}[k] + (I_2\otimes V)v^x[k].
					\end{aligned}
				\end{equation}
				From \eqref{e:intermediateproof}, the dynamics of \(\xi^{p}_{o}[k]\) is derived as
				\begin{equation*}
					\begin{aligned}
						\xi^{p}_{o}[k+1]&=(1-\gamma)I_{N-1}\xi^{p}_{o}[k] + \mat{\gamma I_{N-1} & \tau I_{N-1}}{e_{\xi_{o}}[k]} \\ &+\frac{\tau^2}{2}VJ e_{\nu_{o}}[k] + V v^p[k].  
					\end{aligned}
				\end{equation*}
				Since, we consider Kalman filter for the observable state estimates, the errors \({e_{\xi_{o}}[k]}\), and  \(e_{\nu_{o}}[k]\) are convergent.
				The eigenvalues of \((1-\gamma)I_{N-1}\) are within the unit circle if and only if \(\abs{1-\gamma}\;<\;1\). Thus, \(\lim_{k\to \infty}\E[\xi^{p}_{o}[k]]=0\). 
				Since the error and noise terms in the dynamics of \(\xi^{p}_{o}[k]\) have finite variance, the variance of \(\xi^{p}_{o}[k]\) is also bounded.
				This completes the proof.
			\end{proof}
			% \begin{remark}
				%  \label{r:differencefromCDC2024}  
				%  In our previous work \cite{Dey2024Synchronization}, we assume that the weight vector \(q\) that zero weights are assigned to all the Hm-type clocks, i.e., \(q_{N-M+1}= q_{N-M+2}=\ldots = q_{N}=0\). 
				%  however, in this work, we do not impose any assumption on \(q\). 
				%  In fact, in the next subsection, we chose the weight vector appropriately to improve the frequency stability of generated time scale.
				%  In contrast to CDC2024, we obtained an if and only if condition in Theorem \ref{t:synctheorem}.
				% \end{remark}
			To tackle Problems 2(b) and 2(c), in addition to the synchronization achieved in Theorem \ref{t:synctheorem}, it is necessary to maximize the frequency stability of the GTS.
			To achieve this, we consider the following free-running dynamics 
			\begin{equation}
				\label{e:free-runningensemble}
				\Psi(q) := \simode{g[k+1] &= A g[k] + (I_2\otimes q^{\top})v^{x}[k],\\
					h_{g}[k] &= C g[k].}
			\end{equation}
			
			\noindent In the following subsection, we evaluate the HVAR of the free-running dynamics \(\Psi(q)\) to suitably choose the weight vector \(q\) in order to optimize the short-term or long-term frequency stability of the resulting time scale.
			%%%%%%%%%%%%%%%%%%%%%%%%%%%%%%%%%%%%%%%%%%%%%%%%%%%%%%%%%%%%%%%%%%%%%%%%%%%%%%%%%%%%%%%%%%%%%%%%%%%%%%%%%%%%%%%%%%%%%%%%%%%%%%%%%%%
			\subsubsection{Hadamard Variance of the Free-running Dynamics}
			\label{ss:Hadamardvariancesyncdestination}
			\noindent We aim to find an optimal weight vector which minimizes the Hadamard variance of the free-running dynamics \(\Psi(q)\), denoted by \(\sigma^{2}_{\Hada}(\tau; \Psi(q))\), and it is defined as
			\begin{equation*}
				\sigma^{2}_{\Hada}(\tau; \Psi(q)) :=  \E\Big[ \frac{(\Delta_{1}^3 h_g[k])^2}{6\tau^2}\Big].    
			\end{equation*}
			By further computation, we get
			\begin{equation*}
				\Delta_{1}^{3}h_g[k] =X\mat{(I_2\otimes q^{\top})v^{x}[k]\\ (I_2\otimes q^{\top})v^{x}[k+1]\\ (I_2\otimes q^{\top})v^{x}[k+2]},
			\end{equation*}
			where \(X=\mat{C(A^2-3A+3I_2) & C(A-3I_2) & C}\). 
			Then, the Hadamard variance is computed as 
			\begin{equation*}
				\label{e:intermediateHadarmardcalculation}
				\sigma_{\Hada}^2(\tau; \Psi(q)) = \frac{1}{6\tau^2}X \Big\lbrace I_3 \otimes [(I_2\otimes q^{\top}){Q}^{x}(I_2\otimes q)]\Big\rbrace X^{\top}, 
			\end{equation*}
			where \(Q^{x}\) is given by
			\begin{equation*}
				{Q}^{x} = \mat{\tau \Sigma_1 + \frac{\tau^3}{3}\Sigma_2 + \frac{\tau^5}{20}\Sigma_3 & \frac{\tau^2}{2}\Sigma_2 + \frac{\tau^4}{8}\Sigma_3 \\ \frac{\tau^2}{2}\Sigma_2 + \frac{\tau^4}{8}\Sigma_3 & \tau \Sigma_2 + \frac{\tau^3}{3}\Sigma_3}.
			\end{equation*}
			with \(\Sigma_1\), \(\Sigma_2\), and \(\Sigma_3\) given in \eqref{e:covariancecalculation}.
			% \begin{equation*}
				% \begin{aligned}
					% \Sigma_1&=\diag(\sigma^{2}_{11}, \sigma^{2}_{12},\ldots, \sigma^{2}_{1N}),\\
					% \Sigma_2&=\diag(\sigma^{2}_{21}, \sigma^{2}_{22},\ldots, \sigma^{2}_{2N}),\\
					% \Sigma_3 &=\diag(\underbrace{0, 0,\ldots, 0}_{N-M}, \sigma^{2}_{3N-M+1},\ldots, \sigma^{2}_{3N}).
					% \end{aligned}
				% \end{equation*}
			
			On further calculations, we get
			\begin{equation}
				\label{e:Hadmaardcalculationmixed}
				\sigma_{\Hada}^2(\tau; \Psi(q)) = \frac{q^{\top}\Pi(\tau)q}{\tau^2},
			\end{equation}
			where \(\Pi(\tau)= \tau \Sigma_1 + \frac{\tau^3}{6}\Sigma_2 + \frac{13\tau^5}{360}\Sigma_3\).
			
			\noindent Based on the above calculations, the optimal weight vector \(q\) is obtained by solving the following optimization problem
			\begin{equation}
				\label{e:optimalweight}
				\begin{aligned}
					q_{\Hada}(\tau):= &\argmin_{q}  \sigma_{\Hada}^2(\tau;\Psi(q))\;\; \suth \;\;q^{\top}\one_{N}=1,
				\end{aligned}
			\end{equation}
			
			The solution of the optimization problem \eqref{e:optimalweight} is given by the following result.
			
			\begin{theorem}
				\label{t:optimalweight}
				Consider the free-running dynamics \(\Psi(q)\). Then the weight vector that solves the optimization problem \eqref{e:optimalweight} is given by
				\begin{equation}
					\label{e:optimalweightresult}
					q_{\Hada}(\tau)= \frac{\Pi^{-1}(\tau) \one_{N}}{\one_{N}^{\top}\Pi^{-1}(\tau) \one_{N}}.  
				\end{equation}
			\end{theorem}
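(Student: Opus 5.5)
By \eqref{e:Hadmaardcalculationmixed}, the objective in \eqref{e:optimalweight} equals $q^{\top}\Pi(\tau)q/\tau^2$. The factor $1/\tau^2$ is a positive constant and does not affect the minimizer, so \eqref{e:optimalweight} reduces to the equality-constrained quadratic program
\[
\min_{q\in\R^N}\; q^{\top}\Pi(\tau) q \quad \suth \quad \one_N^{\top}q = 1 .
\]

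The first step is to show that $\Pi(\tau)$ is symmetric positive definite. This makes the problem strictly convex, so it has a unique minimizer and $\Pi^{-1}(\tau)$ exists. By \eqref{e:covariancecalculation}, the matrices $\Sigma_1,\Sigma_2,\Sigma_3$ are diagonal with nonnegative entries, so $\Pi(\tau)=\tau\Sigma_1+\frac{\tau^3}{6}\Sigma_2+\frac{13\tau^5}{360}\Sigma_3$ is diagonal. Its $i$th diagonal entry is
\[
\tau\sigma_{1i}^2+\frac{\tau^3}{6}\sigma_{2i}^2+\frac{13\tau^5}{360}\sigma_{3i}^2 .
\]
This entry is strictly positive as long as each clock has $\sigma_{1i}>0$, i.e.\ nonzero white frequency noise. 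I will state this implicit assumption explicitly; it is the only real obstacle, since without it $\Pi(\tau)$ may be singular and \eqref{e:optimalweightresult} would be undefined.

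The second step is the Lagrangian argument. Form
\[
\mathcal{L}(q,\lambda)=q^{\top}\Pi(\tau) q-2\lambda(\one_N^{\top}q-1).
\]
Stationarity in $q$ gives $\Pi(\tau)q=\lambda\one_N$, so $q=\lambda\Pi^{-1}(\tau)\one_N$. Substituting this into the constraint gives $\lambda=1/(\one_N^{\top}\Pi^{-1}(\tau)\one_N)$, which is well defined because $\one_N^{\top}\Pi^{-1}(\tau)\one_N>0$. This yields exactly \eqref{e:optimalweightresult}. Since the objective is strictly convex and the constraint is affine, the KKT point is the unique global minimizer.

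As an alternative that avoids multipliers, I can give a direct completion-of-squares check. Write any feasible $q$ as $q=q_{\Hada}(\tau)+d$ with $\one_N^{\top}d=0$. The cross term is
\[
2q_{\Hada}^{\top}\Pi(\tau) d=\frac{2\,\one_N^{\top}d}{\one_N^{\top}\Pi^{-1}(\tau)\one_N}=0,
\]
so
\[
q^{\top}\Pi(\tau) q=q_{\Hada}^{\top}\Pi(\tau) q_{\Hada}+d^{\top}\Pi(\tau) d .
\]
Because $\Pi(\tau)$ is positive definite, the objective is minimized exactly when $d=0$. As a by-product, the optimal value of the Hadamard variance is $1/\bigl(\tau^2\,\one_N^{\top}\Pi^{-1}(\tau)\one_N\bigr)$.
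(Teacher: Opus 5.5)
Your proof is correct and is essentially the paper's proof: the paper also uses the Lagrange multiplier method, solving $2\tau^{-2}\Pi(\tau)q-\lambda\one_N=0$ together with $q^{\top}\one_N=1$ to obtain $q_{\Hada}(\tau)$. Your extra steps fill in justifications the paper leaves implicit: you check that $\Pi(\tau)$ is diagonal with positive entries, so $\Pi^{-1}(\tau)$ exists and the stationary point is the unique global minimizer, and you add the completion-of-squares check.
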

			\begin{proof} 
				By Lagrangian multiplier method, the optimum of \eqref{e:optimalweight} is found as the solution of 
				\begin{equation*}
					2(\tau^2)^{-1}\Pi(\tau)q - \lambda\one_{N} = 0, \quad q^{\top}\one_{N}=1,
				\end{equation*}
				which results in \eqref{e:optimalweightresult}. This completes the proof.
			\end{proof}
			
			\noindent Based on Theorem \ref{t:optimalweight}, we can find the weight vector to improve the short-term or long-term frequency stability of the free-running dynamics. 
			The weight vector which maximizes the short-term frequency stability can be found as
			\begin{equation}
				\label{e:shorttermweight}
				q_{\Hada}^{0}:= \lim_{\tau \to 0} q_{\Hada}(\tau) = \frac{\Sigma_1^{-1}\one_{N}}{\one^{\top}_{N}\Sigma_1^{-1}\one_{N}},
			\end{equation}
			where \(\Sigma_1 =\diag(\sigma^2_{11},\sigma^2_{12}, \ldots, \sigma^2_{1N})\).% and \(\sigma_{1i}\) is the standard deviation of the white frequency noise of clock \(i\) in the ensemble.
			
			%\begin{remark}(On short-term stability):
			%\label{r:short-term}
			Observe that the dynamics of the synchronization destination in \eqref{e:syncstate} is effected by \(\nu_{o}[k]\). 
			%Note that \(\nu[k]= z[k]\). 
			By \eqref{e:HadamardFree-running}, we can infer that the frequency drift of a clock has negligible effect in shorter intervals as compared to longer intervals.
			As a consequence, if we set the weight vector \(q= q_{\Hada}^{0}\) then the behavior of the unobservable state \emph{approximately} coincides with \(\Psi(q_{\Hada}^{0})\). Since \(p[k]\) follows \(\theta_{o}[k]\) in expectation as shown in Theorem \ref{t:synctheorem}, by setting \(q= q_{\Hada}^{0}\), we can obtain a time scale with least short-term HVAR. This solves Problem 2(b).% and \eqref{e:shorttermweight}, this results in achieving the best short-term stability of the generated time scale in shorter period of time.   
			%\end{remark}
			%Theorem \ref{t:synctheorem}, Theorem \ref{t:optimalweight}, and Remark \ref{r:long-term} provide solution to generating a time scale with excellent short-term stability.
			
			In the similar manner, the weight vector that maximizes the long-term frequency stability of \(\Psi(q)\) is obtained as
			\begin{equation}
				\label{e:longtermweight}
				q_{\Hada}^{\infty}:= \lim_{\tau \to \infty}  q_{\Hada}(\tau) = \mat{\frac{\Sigma_{2, N-M}^{-1}\one_{N-M}}{\one^{\top}_{N-M}\Sigma_{2,N-M}^{-1}\one_{N-M}}\\ 0_{M\times 1}},
			\end{equation}
			where \(\Sigma_{2,N-M}=\diag(\sigma^2_{21},\sigma^2_{22}, \ldots, \sigma^2_{2N-M})\) is a submatrix of \(\Sigma_2\) whose diagonal elements contain the standard deviations of the random walk frequency noise of the Cs-type clocks.
			Intuitively, it is reasonable to associate zero weight to each Hm-type clock since they have lower frequency stability in longer intervals as compared to Cs-type clocks.
			%\begin{remark}(On long-term stability):
			%\label{r:long-term}
			
			It is interesting to note that if \(q\) is set equal to \(q_{\Hada}^{\infty}\) then \((q_{\Hada}^{\infty})^{\top}J = 0\).
			As a consequence, the dynamics of the unobservable state \eqref{e:syncstate} becomes
			\[
			\unob[k+1] = A\unob[k] + \big(I_2\otimes (q_{\Hada}^{\infty})^{\top}\big)v^{x}[k],
			\]
			which is equivalent to the dynamics of \(\Psi(q_{\Hada}^{\infty})\).
			Thus, by Theorem \ref{t:synctheorem} and setting \(q=q_{\Hada}^{\infty}\), we can generate a time scale while maximizing long-term frequency stability. This solves Problem 2(c).
			
			\section{Illustrative Example}
			\label{s:example}
			\noindent In this section, we present the simulation results to validate our proposed algorithm for time scale generation.
			We consider an ensemble of \(10\) atomic clocks with \(7\) Cs-type clocks and \(3\) Hm-type clocks, i.e., \(N=10\), \(M=3\). 
			The standard deviations of the noises in Cs-type clocks and Hm-type clocks are given in Table \ref{table:Cs} and Table \ref{table:Hm}, respectively.
			The measurement noise parameter is selected as \(r=10^{-27}\).
			The sampling interval is \(\tau=1\)[s] and  \(V=\mat{I_{N-1} & -\one_{N-1}}\).
			We take \(\gamma = 0.1 \).
			
			\begin{table}[t]
				\small
				\caption{Standard dev. of noises of Cs-type clocks}
				\label{table:Cs}
				\centering
				\scalebox{0.9}{
					\begin{tabular}{lccccccccccccccc}
						& scale & 1  &  2 &   3  &  4  &  5  &  6  &  7  \\
						\hline
						$\sigma_1$ & $ 10^{-9}$ & 0.17  &  0.088 &   0.122  &  0.127  &  0.218  &  0.106  &  0.18 \\
						$\sigma_2$ &  $ 10^{-12}$ &  0.15  &  0.053  &  0.016  &  0.077  &  0.294  &  0.049  &  0.04\\  
						\hline
				\end{tabular}}
			\end{table}
			%The standard deviation of noises of Hm-type clocks is given in Table \ref{table:Hm}.
			
			\begin{table}[t]
				\centering
				\small
				\caption{Standard dev. of noises of Hm-type clocks}
				\label{table:Hm}
				\scalebox{0.9}{
					\begin{tabular}{lccccccccccccccc}
						& scale & 8  &  9 &   10  \\
						\hline
						$\sigma_1$ & $ 10^{-9}$ & 0.0216  &  0.0093  &  0.01801  \\
						$\sigma_2$ &  $ 10^{-12}$ & 0.0829  &  0.0520  &  0.0566  \\  
						$\sigma_3$ &  $ 10^{-19}$ &  1  &  1  &  1.7  \\  \hline
				\end{tabular}}
			\end{table}

			\begin{figure}[t]
				\centering
				\includegraphics[width = .99\linewidth, height= 0.6\linewidth]{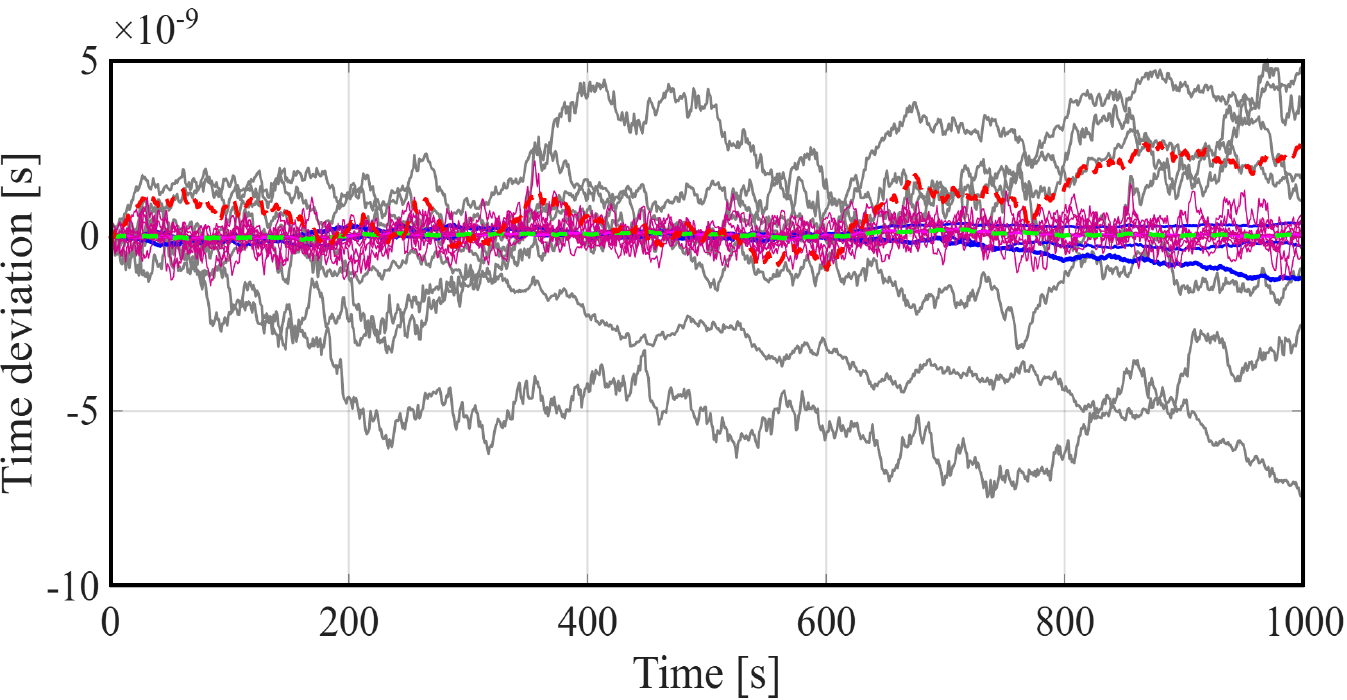}
				\caption{\small
					Time deviations sequence for short period of time when \(q= q_{\Hada}^{0}\).
					The gray lines are the time deviations of free-running Cs-type clocks.
					The blue lines are the time deviations of the free-running Hm-type clocks.
					The green line is the virtual trajectory generated by \(\Psi(q_{\Hada}^{0})\). 
					The red line is the virtual trajectory generated by \(\Psi(q_{\Hada}^{\infty})\).
					The magenta lines are the time deviations of the controlled clocks.
				}
				\label{f:improveshort-term timedeviation}
			\end{figure}
			Firstly, to generate a time scale with least short-term HVAR, we choose the weight vector \(q= q_{\Hada}^{0}\) given in \eqref{e:shorttermweight}. 
			In Fig.~\ref{f:improveshort-term timedeviation}, we show the time deviation plots for shorter period of \(1000\) [s]. 
			For reference, the time deviation plots of the free-running Cs-type clocks and free-running Hm-type clocks are also shown via gray and blue lines, respectively. 
			The virtual trajectory of \(h_g[k]\) for \(\Psi(q_{\Hada}^{0})\) and \(\Psi(q_{\Hada}^{\infty})\) are also shown in green and red lines, respectively. 
			The time deviations of the controlled clocks are shown via magenta lines.
			It is easy to see that the time deviation of each controlled clock approximately follows \(h_g[k]\) corresponding to \(\Psi(q_{\Hada}^{0})\), generating a time scale with much less divergence with respect to ideal time in shorter period of time.
			In Fig.~\ref{f:improveshort-term}, we show the HVAR curves of the controlled clocks to examine the frequency stability of the GTS. 
			For reference, the HVAR curves of the free-running Cs-type clocks and free-running Hm-type clocks are also shown.
			We also show the virtual trajectories of the HVAR curves of the best reference time scale in short-term \(\sigma_{\Hada}^2(\tau; \Psi(q_{\Hada}^{0}))\) and best reference time scale in long-term  \(\sigma_{\Hada}^2(\tau;\Psi(q_{\Hada}^{\infty}))\), respectively.
			As expected, Fig.~\ref{f:improveshort-term} illustrates that the GTS given by the controlled clocks has better short-term frequency stability as compared to individual free-running Cs-type and Hm-type clocks.
			
			\begin{figure}[t]
				\centering
				\includegraphics[width = .99\linewidth, height= 0.6\linewidth]{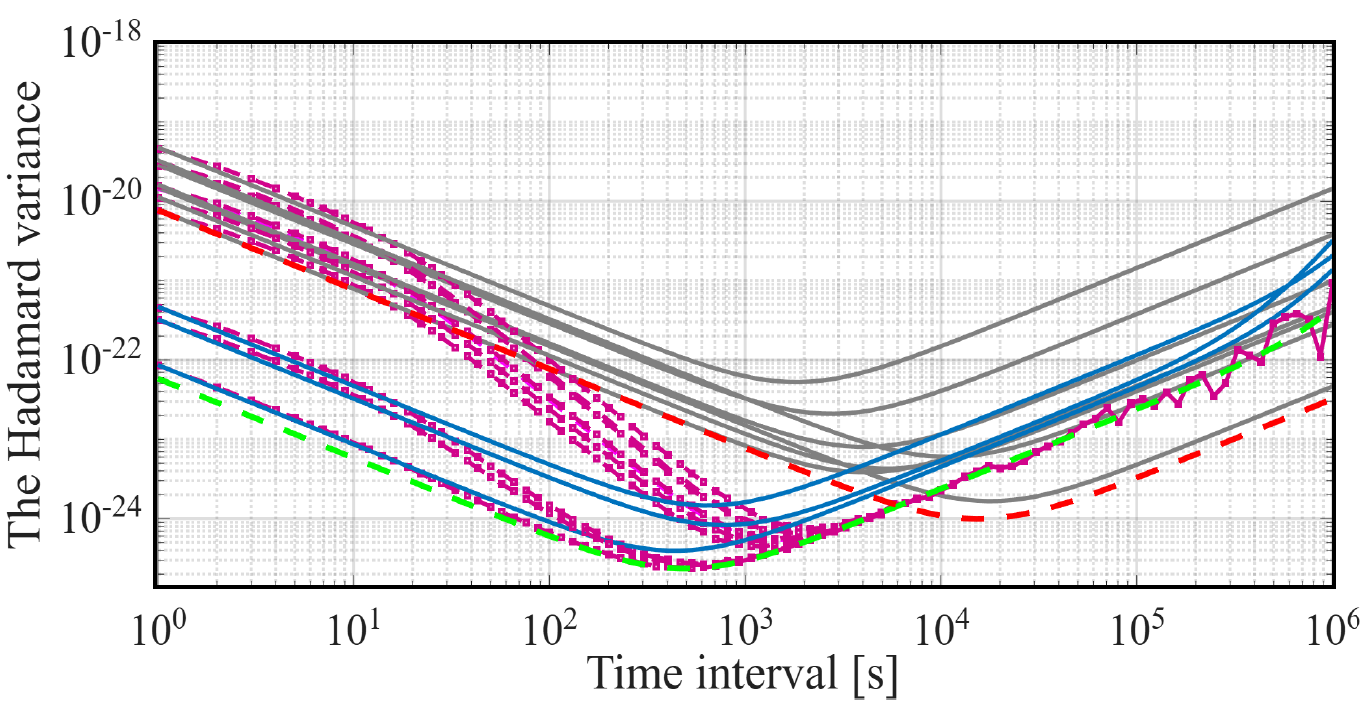}
				\caption{\small
					The Hadamard variances when \(q= q_{\Hada}^{0}\).
					The gray lines are the HVAR of the free-running Cs-type clocks.
					The blue lines are the HVAR of the free-running Hm-type clocks.
					The green line is the  HVAR of \(\Psi(q_{\Hada}^{0})\). 
					The red line is the HVAR of \(\Psi(q_{\Hada}^{\infty})\).
					The magenta lines are the HVAR of the controlled clocks.
				}
				\label{f:improveshort-term}
			\end{figure}
			
			Secondly, to generate time scale with excellent long-term frequency stability, we choose the weight vector \(q=q_{\Hada}^{\infty}\) given in \eqref{e:longtermweight}. The time deviation of each controlled clock for a duration of \(10^5\) [s] and its HVAR curve are shown in Fig.~\ref{f:timedevlong-term} and Fig.~\ref{f:improvelong-term}, respectively.
			Observe that the time deviation of each controlled clock follows the dynamics of \(h_g[k]\) in \(\Psi(q_{\Hada}^{\infty})\), generating a time scale with reduced divergence from the ideal time in longer period of time. Also, Fig.~\ref{f:improvelong-term} clearly demonstrates that the GTS has  excellent long-term frequency stability.
			
			\begin{figure}[t]
				\centering
				\includegraphics[width = .99\linewidth, height= 0.6\linewidth]{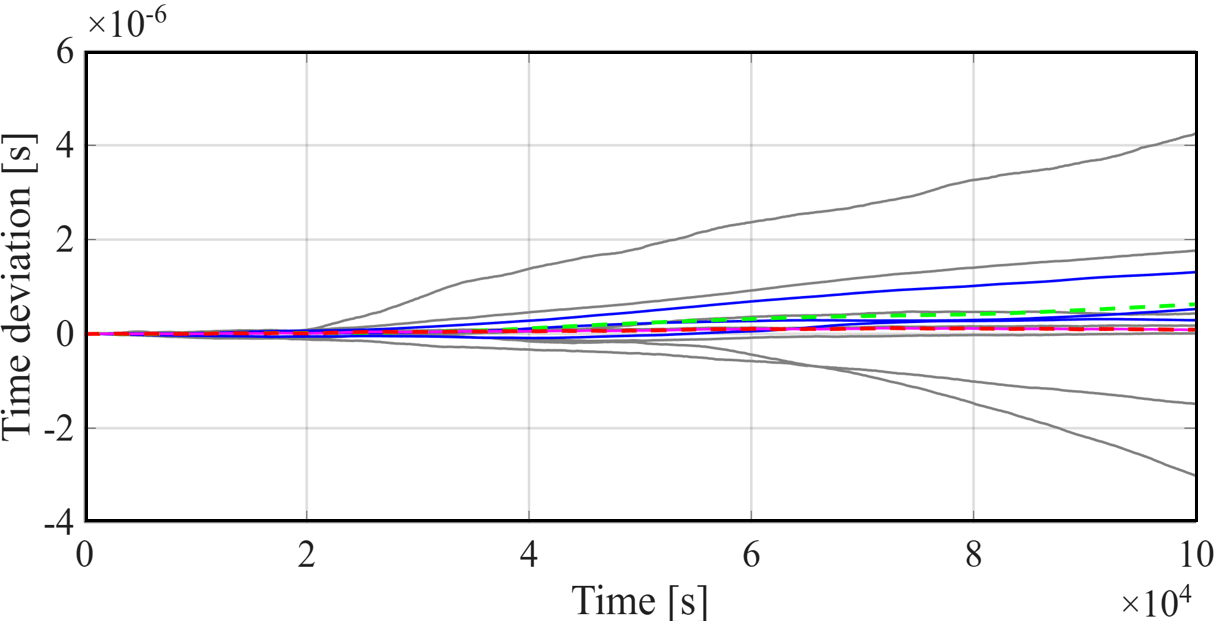}
				\caption{\small
					Time deviations sequence for long period of time when \(q= q_{\Hada}^{\infty}\).
					The gray lines are the time deviations of free-running Cs-type clocks.
					The blue lines are the time deviations of the free-running Hm-type clocks.
					The green line is the virtual trajectory generated by \(\Psi(q_{\Hada}^{0})\). 
					The red line is the virtual trajectory generated by \(\Psi(q_{\Hada}^{\infty})\).
					The magenta lines are the time deviations of the controlled clocks.
				}
				\label{f:timedevlong-term}
			\end{figure}
			
			\begin{figure}[t]
				\centering
				\includegraphics[width = .99\linewidth, height= 0.6\linewidth]{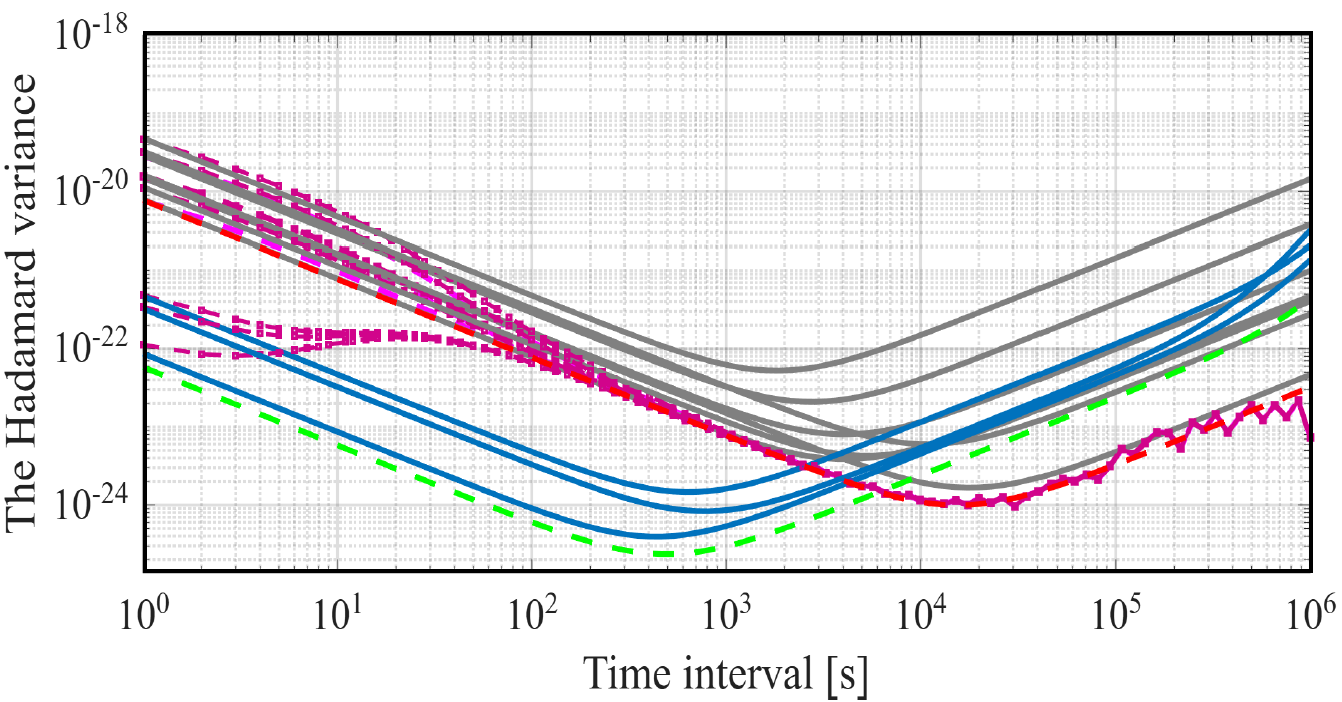}
				\caption{\small
					The Hadamard variances when \(q= q_{\Hada}^{\infty}\).
					The gray lines are the HVAR of the free-running Cs-type clocks.
					The blue lines are the HVAR of the free-running Hm-type clocks.
					The green line is the  HVAR of \(\Psi(q_{\Hada}^{0})\). 
					The red line is the HVAR of \(\Psi(q_{\Hada}^{\infty})\).
					The magenta lines are the HVAR of the controlled clocks.
				}
				\label{f:improvelong-term}
			\end{figure}
			
			Note that since \(q_{\Hada}^{0}\) and \(q_{\Hada}^{\infty}\) are not related to each other, we cannot achieve short-term and long-term frequency stability simultaneously. 
			Future work includes to improve our strategy so as to guarantee that the resulting time scale has balanced short-term and long-term frequency stability.

			%Future work involves finding a strategy to simultaneously improve the short-term and long-term stability of the time scale.

			%%%%%%%%%%%%%%%%%%%%%%%%%%%%%%%%%%%%%%%%%%%%%%%%%%%%%%%%%%%%%%%%%%%%%%%%%%%%%%%%%%%%%%%%%%%%
			
			%% There are a number of predefined theorem-like environments in
			%% ifacconf.cls:
			%%
			%% \begin{thm} ... \end{thm}            % Theorem
			%% \begin{lem} ... \end{lem}            % Lemma
			%% \begin{claim} ... \end{claim}        % Claim
			%% \begin{conj} ... \end{conj}          % Conjecture
			%% \begin{cor} ... \end{cor}            % Corollary
			%% \begin{fact} ... \end{fact}          % Fact
			%% \begin{hypo} ... \end{hypo}          % Hypothesis
			%% \begin{prop} ... \end{prop}          % Proposition
			%% \begin{crit} ... \end{crit}          % Criterion

			\bibliographystyle{IEEEtran}
			\bibliography{Ref_arxiv}

		\end{document}